\newcommand{\comment}[1]{}
\def\tn{\textnormal}
\def\mc{\mathcal}
\def\ZZ{{\mathbb Z}}
\def\NN{{\mathbb N}}
\def\SS{{\mathbb S}}
\def\Hom{\tn{Hom}}
\def\Mor{\tn{Mor}}
\def\Path{\tn{Path}}
\def\Ob{\tn{Ob}}
\def\SEL*{\tn{SEL*}}
\def\hsp{\hspace{.3in}}
\newcommand{\tin}[1]{\text{\tiny #1}}
\def\to{\rightarrow}
\def\from{\leftarrow}
\def\Down{\Downarrow}
\def\cross{\times}
\def\taking{\colon}
\def\too{\longrightarrow}
\def\ss{\subseteq}
\def\iso{\cong}
\def\|{{\;|\;}}
\def\m1{{-1}}
\def\wh{\widehat}
\def\ul{\underline}
\newcommand{\LMO}[1]{\stackrel{#1}{\bullet}}
\newcommand{\LTO}[1]{\stackrel{\tt{#1}}{\bullet}}
\newcommand{\LA}[2]{\ar[#1]^-{\tn {#2}}}
\newcommand{\LAL}[2]{\ar[#1]_-{\tn {#2}}}
\newcommand{\obox}[3]{\stackrel{#1}{\fbox{\parbox{#2}{#3}}}}
\def\ullimit{\ar@{}[rd]|(.3)*+{\lrcorner}}
\def\urlimit{\ar@{}[ld]|(.3)*+{\llcorner}}
\def\lllimit{\ar@{}[ru]|(.3)*+{\urcorner}}
\def\lrlimit{\ar@{}[lu]|(.3)*+{\ulcorner}}
\def\ulhlimit{\ar@{}[rd]|(.3)*+{\diamond}}
\def\urhlimit{\ar@{}[ld]|(.3)*+{\diamond}}
\def\llhlimit{\ar@{}[ru]|(.3)*+{\diamond}}
\def\lrhlimit{\ar@{}[lu]|(.3)*+{\diamond}}
\newcommand{\clabel}[1]{\ar@{}[rd]|(.5)*+{#1}}
\newcommand{\TriRight}[7]{\xymatrix{#1\ar[dr]_{#2}\ar[rr]^{#3}&&#4\ar[dl]^{#5}\\&#6\ar@{}[u] |{\Longrightarrow}\ar@{}[u]|>>>>{#7}}}
\newcommand{\TriLeft}[7]{\xymatrix{#1\ar[dr]_{#2}\ar[rr]^{#3}&&#4\ar[dl]^{#5}\\&#6\ar@{}[u] |{\Longleftarrow}\ar@{}[u]|>>>>{#7}}}
\newcommand{\TriIso}[7]{\xymatrix{#1\ar[dr]_{#2}\ar[rr]^{#3}&&#4\ar[dl]^{#5}\\&#6\ar@{}[u] |{\Longleftrightarrow}\ar@{}[u]|>>>>{#7}}}
\newcommand{\arr}[1]{\ar@<.5ex>[#1]\ar@<-.5ex>[#1]}
\newcommand{\arrr}[1]{\ar@<.7ex>[#1]\ar@<0ex>[#1]\ar@<-.7ex>[#1]}
\newcommand{\arrrr}[1]{\ar@<.9ex>[#1]\ar@<.3ex>[#1]\ar@<-.3ex>[#1]\ar@<-.9ex>[#1]}
\newcommand{\arrrrr}[1]{\ar@<1ex>[#1]\ar@<.5ex>[#1]\ar[#1]\ar@<-.5ex>[#1]\ar@<-1ex>[#1]}
\newcommand{\To}[1]{\xrightarrow{#1}}
\newcommand{\Too}[1]{\xrightarrow{\ \ #1\ \ }}
\newcommand{\Froom}[1]{\xleftarrow{\ \ #1\ \ }}
\newcommand{\Adjoint}[4]{\xymatrix@1{#2 \ar@<.5ex>[r]^-{#1} & #3 \ar@<.5ex>[l]^-{#4}}}
\def\id{\tn{id}}
\def\Top{{\bf Top}}
\def\Cat{{\bf Cat}}
\def\cpo{{\bf cpo}}
\def\Str{\tn{Str}}
\def\Inst{{\bf Inst}}
\def\Type{{\bf Type}}
\def\Set{{\bf Set}}
\def\set{{\text \textendash}{\bf Set}}
\def\inst{{{\text \textendash}\bf \Inst}}
\def\bhline{\Xhline{2\arrayrulewidth}}
\def\bbhline{\Xhline{2.5\arrayrulewidth}}
\def\mcB{\mc{B}}
\def\mcC{\mc{C}}
\def\mcD{\mc{D}}
\def\mcE{\mc{E}}
\def\mcF{\mc{F}}
\def\mcG{\mc{G}}
\def\mcS{\mc{S}}
\newtheorem{theorem}{Theorem}[subsection]
\newtheorem{lemma}[theorem]{Lemma}
\newtheorem{proposition}[theorem]{Proposition}
\theoremstyle{remark}
\newtheorem{remark}[theorem]{Remark}
\newtheorem{example}[theorem]{Example}
\newtheorem{question}[theorem]{Question}
\newtheorem{guess}[theorem]{Guess}
\theoremstyle{definition}
\newtheorem{definition}[theorem]{Definition}
\newtheorem{notation}[theorem]{Notation}
\newtheorem{construction}[theorem]{Construction}
\def\Sch{{\bf Sch}}
\def\Fin{{\bf Fin}}
\newcommand{\labelDisp}[2]{\begin{align}\label{#1}\text{#2}\end{align}}
\newcommand{\mainCatLarge}[1]{ 
	\stackrel{#1}{
		\parbox{2.85in}{\small\fbox{\parbox{2.85in}{\underline{Mgr;Dpt$ \simeq$ Dpt}\hsp  \underline{Secr;Dpt$ \simeq\id_{\tn{Department}}$}\\\\
			\xymatrix@=4pt{&\LTO{Employee}\ar@<.5ex>[rrrrr]^{\tn{Dpt}}\ar@(l,u)[]^{\tn{Mgr}}\ar[dddl]_{\tn{First}}\ar[dddr]^{\tn{Last}}&&&&&\LTO{Department}\ar@<.5ex>[lllll]^{\tn{Secr}}\ar[ddd]^{\tn{Name}}\\\\\\\LTO{String1}&&\LTO{String2}&~&~&~&\LTO{String3}
			}
		}}}
	}
}
\begin{document}

\title{Functorial data migration}

\author{David I. Spivak}

\address{Department of Mathematics, Massachusetts Institute of Technology, Cambridge MA 02139}

\email{dspivak@math.mit.edu}

\thanks{This project was supported by ONR grant N000141010841.}

\begin{abstract}

In this paper we present a simple database definition language: that of categories and functors. A database schema is a small category and an instance is a set-valued functor on it. We show that morphisms of schemas induce three ``data migration functors", which translate instances from one schema to the other in canonical ways. These functors parameterize projections, unions, and joins over all tables simultaneously and can be used in place of conjunctive and disjunctive queries. We also show how to connect a database and a functional programming language by introducing a functorial connection between the schema and the category of types for that language. We begin the paper with a multitude of examples to motivate the definitions, and near the end we provide a dictionary whereby one can translate database concepts into category-theoretic concepts and vice-versa.

\end{abstract}

\maketitle

\tableofcontents

\section{Introduction}\label{sec:intro}

This paper has two main goals. The first goal is to present a straightforward category-theoretic model of databases under which every theorem about small categories becomes a theorem about databases. To do so, we will present a category $\Sch$ of database schemas, which has three important features: \begin{itemize}\item the category $\Sch$ is equivalent to $\Cat$, the category of small categories, \item the category $\Sch$ is a faithful model for real-life database schemas, and\item the category $\Sch$ serves as a foundation upon which high-level database concepts rest easily and harmoniously.\end{itemize} 

The second goal is to apply this category-theoretic formulation to provide new data migration functors, so that for any translation of schemas $F\taking\mcC\to\mcD$, one can transport instances on the source schema $\mcC$ to instances on the target schema $\mcD$ and vice versa, with provable ``round-trip" properties. For example, homomorphisms of instances are preserved under all migration functors. While these migration functors do not appear to have been discussed in database literature, their analogues are well-known in modern programming languages theory, e.g. the theory of dependent types (\cite{Mar}), and polynomial data types (\cite{Kan}). This is part of a deeper connection between database schemas and {\em kinds} (structured collections of types, see \cite{SH}, \cite{Ham}) in programming languages. See also Section \ref{sec:Grothendieck}.

An increasing number of researchers in an increasing variety of disciplines are finding that categories and functors offer high-quality models, which simplify and unify their respective fields.\footnote{Aside from mathematics, in which category-theoretic language and theorems are indispensable in modern algebra, geometry, and topology, category theory has been successful in: programming language theory \cite{Mog},\cite{Pie}; physics \cite{BS},\cite{DI}; materials science \cite{SGWB},\cite{GSB}; and biology \cite{Ros},\cite{EV}.} The quality of a model should be judged by its efficiency as a proxy or interface---that is, by the ease with which an expert can work with only an understanding of the model, and in so doing successfully operate the thing itself. Our goal in this paper is to provide a high-quality model of databases.

Other category-theoretic models of databases have been presented in the past (\cite{JoM},\cite{BCW},\cite{JRW},\cite{IP},\cite{PS},\cite{DC},\cite{Tui}). Almost all of them used the more expressive notion of sketch where we have used categories. The additional expressivity came at a cost that can be cast in terms of our two goals for this paper. First, the previous models were more complex and this may have created a barrier to wide-spread understanding and adoption. Second, morphisms of sketches do not generally induce the sorts of data migration functors that morphisms of categories do. 

It is our hope that the present model is simple enough that anyone who has an elementary understanding of categories (i.e. who knows the definition of category, functor, and natural transformation) will, without too much difficulty, be able to understand the basic idea of our formulation: database schemas as categories, database instances as functors. Moreover, we will provide a dictionary (see Section \ref{sec:dictionary}, Table \ref{table:dictionary}) whereby the main results and definitions in this paper will simply correspond to results or definitions of standard category theory; this way, the reader can rely on tried and true sources to explain the more technical ideas presented here. Moreover, one may hope to leverage existing mathematical theory to their own database issues through this connection.

Before outlining the plan of the paper in Section \ref{sec:plan} we will give a short introduction to the fundamental idea on which the paper rests, and provide a corresponding ``categorical normal form" for databases, in Section \ref{sec:CNF}.

\subsection{Categorical normal form}\label{sec:CNF}

A database schema may contain hundreds of tables and foreign keys. Each foreign key links one table to another, and each sequence of foreign keys $T_1\to T_2\to \cdots\to T_n$ results in a function $f$ from the set of records in $T_1$ to the set of records in $T_n$. It is common that two different foreign key paths, both connecting table $T_1$ to table $T_n$, may exist; and they may or may not define the same mapping on the level of records. For example, an operational ``landline" phone is assigned a phone number whose area code corresponds to the region in which the physical phone is located. Thus we have two paths $OLP\to R$:\begin{align}\label{dia:phone paths}\xymatrix{&\obox{N}{.7in}{phoneNum}\LA{r}{has}&\obox{C}{.6in}{areaCode}\ar@{}[dl]|{\simeq}\LA{d}{correspondsTo}\\\obox{OLP}{1.6in}{operationalLandlinePhone}\LA{ru}{assigned}\LAL{r}{is}&\obox{P}{.9in}{physicalPhone}\LAL{r}{locatedIn}&\obox{R}{.4in}{region}}\end{align} and the data architect for this schema knows whether or not these two paths should always produce the same mapping. In (\ref{dia:phone paths}), the two paths $OLP\to R$ do produce the same mapping, and the $\simeq$ sign is intended to record that fact. For contrast, if we replace operationalLandlinePhone with operationalMobilePhone (OMP), the two paths $OMP\to R$ would not produce the same mapping, because a cellphone need not be currently located in the region indicated by its area code. Thus we would get a similar but different diagram,\begin{align}\label{dia:phone paths2}\xymatrix{&\obox{N}{.7in}{phoneNum}\LA{r}{has}&\obox{C}{.6in}{areaCode}\LA{d}{correspondsTo}\\\obox{OMP}{1.5in}{operationalMobilePhone}\LA{ru}{assigned}\LAL{r}{is}&\obox{P}{.9in}{physicalPhone}\LAL{r}{locatedIn}&\obox{R}{.4in}{region}}\end{align} 

We are emphasizing here that the notion of {\em path equivalence} is an important and naturally-arising integrity constraint, which provides a crucial clue into the intended semantics of the schema. Its enforcement is often left to the application layer, but it should actually be included as part of the schema (\cite{JD}). Including path equivalence information in the database schema has three main advantages: \begin{itemize}\item it permits the inclusion of ``hot" query columns without redundancy, \item it provides an important check for creating schema mappings, and \item it promotes healthy schema evolution.\end{itemize}

A {\em category} (in the mathematical sense) is roughly a graph with one additional bit of expressive power: the ability to declare two paths equivalent.  We now have the desired connection between database schemas and categories: Tables in a schema are specified by vertices (or as we have drawn them in Diagram (\ref{dia:phone paths}), by boxes); columns are specified by arrows; and functional equivalence of foreign key paths are specified by the category-theoretic notion of path equivalence (indicated by the $\simeq$ symbol). The categorical definition of schema will be presented rigorously in Section \ref{sec:definitions}.

The above collection of ideas leads us to the following normal form for databases.

\begin{definition}\label{def:CNF}

A database is in {\em categorical normal form} if 
\begin{itemize}
\item every table $t$ has a single primary key column $\text{ID}_t$, chosen at the outset.  The cells in this column are called the {\em row-ids} of $t$;
\item for every column $c$ of a table $t$, there exists some {\em target table} $t'$ such that the value in each cell of column $c$ refers to some row-id of $t'$. We denote this relationship by $$c\taking t\to t';$$ 
\item in particular, if some column $d$ of $t$ consists of ``pure data" (such as strings or integers), then its target table $t'$ is simply a 1-column table (i.e. a controlled vocabulary, containing at least the active domain of column $d$), and we still write $d\taking t\to t'$; and 
\item when there are two paths $p,q$ through the database from table $t$ to table $u$ (denoted $p\taking t\to u,\;q\taking t\to u$) and it is known to the schema designer that $p$ and $q$ should correspond to the same mapping of row-ids, then this path equivalence must be declared as part of the schema. We denote this path equivalence by $$p\simeq q.$$\end{itemize}

\end{definition}

\subsection{Plan of the paper}\label{sec:plan}

We begin in Section \ref{sec:examples} by giving a variety of examples, which illustrate the virtues of the category-theoretic approach. These include conceptual clarity, simplified data migration, updatable views, interoperability with RDF data, and a close connection between data and program.  In Section \ref{sec:definitions} we will give the precise definitions of categorical schemas and translations, and show that the category thereof, denoted $\Sch$, is equivalent to $\Cat$, the category of small categories. In this section we will also define the category of database instances on a given schema. In Section \ref{sec:data migration} we will define the data migration functors associated to a translation and begin to wrap up our tour by returning to the examples from Section \ref{sec:examples} for a more in-depth treatment. We finish this work in Section \ref{sec:typing}, where we discuss data types and filtering. 

\subsection{Terminology and notation}\label{sec:notation}

One obstacle to writing this paper is a certain overlap in terminology between databases and categories: the word ``object" is commonly used in both contexts. While object databases are interesting and perhaps relevant to some of the ideas presented here, we will not discuss them at all, hence keeping the namespace clear for categorical terminology. {\bf Unless otherwise specified, the word {\em object} will always be intended in the category-theoretic sense.} 

Say we have maps $A\To{f}B\To{g}C$; we may denote their composite $A\to C$ in one of two ways, depending on what seems more readable. The first is called ``diagrammatic order" and is written as $f;g$. The second is called ``classical order" and is written as $g\circ f$. We may sometimes choose not to write a symbol between $f$ and $g$, and in that case we use diagrammatic order $fg\taking A\to C$.

\subsection{Acknowledgments}

The present paper is a total revamping of \cite{Sp1}, which, in an attempt to accommodate two disjoint communities (mathematicians and database experts), ended up as a sprawling and somewhat incoherent document. My thanks go to the referees and to Bob Harper for pointing me toward a streamlined presentation. I would also like to thank Peter Gates, Dave Balaban, John Launchbury, and Greg Morrisett for many useful conversations. I appreciate very much the encouragement of Jack Morava, especially in the form his vision (\cite{Mor}) of how ideas from this paper may be useful for exposing patterns in pure mathematics. Special thanks also go to Scott Morrison for coding many of the ideas from this paper into working form, available online for demonstration or open-source participation at \url{http://categoricaldata.net/}.

\section{Virtues by Example}\label{sec:examples}

In what follows we illustrate the merits of the category-theoretic approach by way of several examples. One thing to note is that each of these features flows naturally from our compact mathematical definitions of schemas and translations. These definitions will be given in Section \ref{sec:definitions}. 

\subsection{Conceptual clarity}

In a categorical schemas (Definition \ref{def:categorical schema}), every table is a vertex and every column is an arrow. An arrow $\LTO{T}\Too{c}\LTO{U}$ represents a column of table {\tt T}, with target table {\tt U}, i.e. a foreign key constraint declaring that each cell in column $c$ refers to a row-id in table {\tt U}. We draw a box around our system of vertices and arrows, and the result is a categorical representation of the schema\footnote{A system of vertices and arrows of this sort is called a graph. A graph can be considered as a kind of category (a so-called {\em free category}) in which no path equivalences have been declared.}.

\begin{example}\label{ex:two fact schema}

The following picture represents a schema $S$ with six tables, two of which are multi-column ``fact tables" and four of which are 1-column ``leaf" tables. \begin{align}\label{dia:two facts}\mcC:=\parbox{1.5in}{\fbox{\xymatrix@=10pt{&\LTO{SSN}\\&\LTO{First}\\\LTO{T1}\ar[uur]\ar[ur]\ar[dr]&&\LTO{T2}\ar[ul]\ar[dl]\ar[ddl]\\&\LTO{Last}\\&\LTO{Salary}}}}\end{align} The fact table ${\tt T1}$ has three columns (pointing to {\tt SSN}, {\tt First}, {\tt Last}), in addition to its ID column; the fact table ${\tt T2}$ also has three non-ID columns (pointing to {\tt First}, {\tt Last}, {\tt Salary}). The leaf tables, {\tt SSN}, {\tt First}, {\tt Last}, and {\tt Salary} have only ID columns, as is seen by the fact that no arrows emanate from them.

As a set of tables, an instance on schema $\mcC$ may look something like this: 
\labelDisp{dia:facts}{
\begin{tabular}{| l || l | l | l |}\bhline\multicolumn{4}{| c |}{{\tt T1}}\\\bhline {\bf ID}&{\bf SSN}&{\bf First}&{\bf Last}\\\bbhline T1-001&115-234&Bob&Smith\\\hline T1-002&122-988&Sue&Smith\\\hline T1-003&198-877&Alice&Jones\\\bhline
\end{tabular}
\hsp
\begin{tabular}{| l || l | l | l |}\bhline\multicolumn{4}{| c |}{{\tt T2}}\\\bhline {\bf ID}&{\bf First}&{\bf Last}&{\bf Salary}\\\bbhline T2-A101&Alice&Jones&\$100\\\hline T2-A102&Sam&Miller&\$150\\\hline T2-A104&Sue&Smith&\$300\\\hline T2-A110&Carl&Pratt&\$200 \\\bhline
\end{tabular}
}
\labelDisp{dia:leaves}{
\begin{tabular}{| l ||}\bhline\multicolumn{1}{| c |}{{\tt SSN}}\\\bhline {\bf ID}\\\bbhline 115-234\\\hline 118-334\\\hline 122-988\\\hline 198-877\\\hline 342-164\\\bhline
\end{tabular}
\hsp\hsp
\begin{tabular}{| l ||}\bhline\multicolumn{1}{| c |}{{\tt First}}\\\bhline {\bf ID}\\\bbhline Adam\\\hline Alice\\\hline Bob\\\hline Carl\\\hline Sam\\\hline Sue\\\bhline
\end{tabular}
\hsp\hsp
\begin{tabular}{| l ||}\bhline\multicolumn{1}{| c |}{{\tt Last}}\\\bhline {\bf ID}\\\bbhline Jones\\\hline Miller\\\hline Pratt\\\hline Richards\\\hline Smith\\\bhline
\end{tabular}
\hsp\hsp
\begin{tabular}{| l ||}\bhline\multicolumn{1}{| c |}{{\tt Salary}}\\\bhline {\bf ID}\\\bbhline \$100\\\hline \$150\\\hline \$200\\\hline \$250\\\hline \$300\\\bhline
\end{tabular}
}

The thing to recognize here is that each column header $c$ of {\tt T1} (respectively, of {\tt T2}) points to some target table, in such a way that every cell in column $c$ refers to a row-id in that target table.\footnote{In the case that $c$ is the ID column of {\tt T1}, the target table to which $c$ points is {\tt T1}, and each cell in column $c$ is a row-id in {\tt T1} that refers to itself.} The leaf tables serve as controlled vocabularies for the fact tables.

\end{example}

\begin{notation}\label{notation:leaf tables}

In Example \ref{ex:two fact schema}, we wrote out all four leaf tables (Display \ref{dia:leaves}). In the future we will not generally write out leaf tables for space reasons. In fact, any table that does not add explanatory power to a given example may be left out from our displays.

\end{notation}

\begin{example}\label{ex:famous schema}

In this example we present a schema $\mcC$ that includes path equivalences, and hence takes advantage of the full expressivity of categories. We imagine a company with employees and departments; every employee is in a department, every employee has a manager employee, and every department has a secretary employee. Using path equivalences, we enforce the following facts:\begin{itemize} \item the manager of an employee is in the same department as that employee, and \item the secretary of a department is in that department.\end{itemize} These facts are recorded as equations at the top of the following diagram:

\begin{align}\label{dia:employee}\mcC:=\parbox{2.9in}{\fbox{\parbox{2.9in}{\begin{center}\underline{Mgr isIn $\simeq$ isIn}\hsp  \underline{Secr isIn $\simeq \id_{{\tt Department}}$}\end{center}~\\\\\\
			\xymatrix@=4pt{&\LTO{Employee}\ar@<.5ex>[rrrrr]^{\tn{isIn}}\ar@(l,u)[]+<5pt,10pt>^{\tn{Mgr}}\ar[dddl]_{\tn{First}}\ar[dddr]^{\tn{Last}}&&&&&\LTO{Department}\ar@<.5ex>[lllll]^{\tn{Secr}}\ar[ddd]^{\tn{Name}}\\\\\\\LTO{String1}&&\LTO{String2}&~&~&~&\LTO{String3}
			}
		}}}\vspace{1in}\end{align}

As a set of tables, an instance on $\mcC$ may look something like this: 

\begin{align*}&\begin{tabular}{| l || l | l | l | l |}\bhline\multicolumn{5}{| c |}{{\tt Employee}}\\\bhline {\bf ID}&{\bf First}&{\bf Last}&{\bf Mgr}&{\bf isIn}\\\bbhline 101&David&Hilbert&103&q10\\\hline 102&Bertrand&Russell&102&x02\\\hline 103&Alan&Turing&103&q10\\\bhline\end{tabular}&\begin{tabular}{| l || l | l |}\bhline\multicolumn{3}{| c |}{{\tt Department}}\\\bhline {\bf ID}&{\bf Name}&{\bf Secr}\\\bbhline q10&Sales&101\\\hline x02&Production&102\\\bhline\end{tabular}\end{align*}
 
It is no coincidence that there are a total of six non-ID columns in the two tables and a total of six arrows in the schema $\mcC$. The equations can be checked on these cells; for example we can check the first equation on row-id 101: \begin{center}101.Manager.isIn = 103.isIn = q10 \hsp and \hsp 101.isIn = q10.\end{center}

An instance $I$ on $\mcC$ is only valid if the two equations hold for every row in $I$.

\end{example}

\subsection{Simplified data migration}\label{sec:simplified data migration}

A translation $F\taking\mcC\to\mcD$ of schemas (Definition \ref{def:translation}) is a mapping that takes vertices in $\mcC$ to vertices in $\mcD$ and arrows in $\mcC$ to {\em paths} in $\mcD$; in so doing, it must respect arrow sources, arrow targets, and path equivalences. We will be using the following translation $F\taking\mcC\to\mcD$ throughout Section \ref{sec:simplified data migration}.

\Large\begin{align}\label{dia:translation}\mcC:=\parbox{1.3in}{\fbox{\xymatrix@=10pt{&\LTO{SSN}\\&\LTO{First}\\\color{red}{\LTO{T1}}\ar[uur]\ar[ur]\ar[dr]&&\color{red}{\LTO{T2}}\ar[ul]\ar[dl]\ar[ddl]\\&\LTO{Last}\\&\LTO{Salary}}}}\Too{F}\parbox{.9in}{\fbox{\xymatrix@=10pt{&\LTO{SSN}\\&\LTO{First}\\\color{red}{\LTO{T}}\ar[uur]\ar[ur]\ar[dr]\ar[ddr]\\&\LTO{Last}\\&\LTO{Salary}}}}=:\mcD\end{align}\normalsize

The mapping $F$ is drawn as suggestively as possible. In the future, we will rely on this ``power of suggestion" to indicate the translations, but this time we will be explicit. Each of the four leaf vertices, {\tt SSN}, {\tt First}, {\tt Last}, and {\tt Salary} in $\mcC$ is mapped to the vertex in $\mcD$ of the same label. The two other vertices in $\mcC$, namely {\tt T1} and {\tt T2}, are mapped to vertex {\tt T} in $\mcD$. Since translations must respect arrow sources and targets, there is no additional choice about where the arrows in $\mcC$ are sent; for example $F$ sends the arrow ${\tt T1}\to{\tt First}$ in $\mcC$  to the arrow ${\tt T}\to{\tt First}$ in $\mcD$. 

We have now specified a translation $F$ from schema $\mcC$ to schema $\mcD$, pictured in Diagram \ref{dia:translation}. Springing forth from this translation are three data migration functors, which we will discuss in turn in Examples \ref{ex:pullback}, \ref{ex:right pushforward}, and \ref{ex:left pushforward}. They migrate instance data on $\mcD$ to instance data on $\mcC$ and vice versa. Instances on $\mcC$ (respectively on $\mcD)$ form a category, which we denote by $\mcC\inst$ (respectively by $\mcD\inst$); they are defined in Definition \ref{def:category of instances}. We have the following chart, the jargon of which will be introduced shortly: \begin{center}\begin{tabular}{| c | c | c | c |}\bhline\multicolumn{4}{| c |}{Data migration functors induced by a translation $F\taking\mcC\to\mcD$}\\\bhline {\bf Name}&{\bf Symbol}&{\bf Long symbol}&{\bf Idea of definition}\\\bbhline Pullback & $\Delta_F$ & $\Delta_F\taking\mcD\inst\to\mcC\inst$&Composition with $F$\\\hline Right Pushforward&$\Pi_F$&$\Pi_F\taking\mcC\inst\to\mcD\inst$&Right adjoint to $\Delta_F$\\\hline Left Pushforward&$\Sigma_F$&$\Sigma_F\taking\mcC\inst\to\mcD\inst$&Left adjoint to $\Delta_F$\\\bhline\end{tabular}\end{center} Everything in the chart will be defined in Section \ref{sec:data migration}. For now we give three examples. In each, we will be starting with the translation $F\taking\mcC\to\mcD$, given above in Diagram (\ref{dia:translation}).

\begin{example}[Pullback]\label{ex:pullback}

Let $F\taking\mcC\to\mcD$ be the translation given in Diagram (\ref{dia:translation}). In this example, we explore the data migration functor $\Delta_F\taking\mcD\inst\to\mcC\inst$\footnote{We have not defined $\Delta_F$ yet; this will be done in Section \ref{sec:pullback}.} by applying it to a $\mcD$-instance $J$. Note that even though our translation $F$ points forwards (from $\mcC$ to $\mcD$), our migration functor $\Delta_F$ points ``backwards" (from $\mcD$-instances to $\mcC$-instances). We will see why it works that way, but first we bring the discussion down to earth by working with a particular $\mcD$-instance.

Consider the instance $J$, on schema $\mcD$, defined by the table 
\labelDisp{dia:tableT}{
J:=\;\begin{tabular}{| l || l | l | l | l |}\bhline\multicolumn{5}{| c |}{{\tt T}}\\\bhline {\bf ID}&{\bf SSN}&{\bf First}&{\bf Last}&{\bf Salary}\\\bbhline XF667&115-234&Bob&Smith&\$250\\\hline XF891&122-988&Sue&Smith&\$300\\\hline XF221&198-877&Alice&Jones&\$100\\\bhline
\end{tabular}
}
and having the four leaf tables from Example \ref{ex:two fact schema}, Display (\ref{dia:leaves}). Pulling back along the translation $F$, we are supposed to get an instance $\Delta_F(J)$ on schema $\mcC$, which we must describe. But the description is easy: $\Delta_F(J)$ splits up the columns of table ${\tt T}$ according to the translation $F$. The four leaf tables will be exactly the same as above (i.e. the same as in Example \ref{ex:two fact schema} (\ref{dia:leaves})), and the two fact tables will be something like\footnote{There may be choice in the naming convention for row-ids.}\labelDisp{dia:pullback facts}{
\begin{tabular}{| l || l | l | l |}\bhline\multicolumn{4}{| c |}{{\tt T1}}\\\bhline {\bf ID}&{\bf SSN}&{\bf First}&{\bf Last}\\\bbhline XF667T1&115-234&Bob&Smith\\\hline XF891T1&122-988&Sue&Smith\\\hline XF221T1&198-877&Alice&Jones\\\bhline
\end{tabular}
\hsp
\begin{tabular}{| l || l | l | l |}\bhline\multicolumn{4}{| c |}{{\tt T2}}\\\bhline {\bf ID}&{\bf First}&{\bf Last}&{\bf Salary}\\\bbhline A21&Alice&Jones&\$100\\\hline A67&Bob&Smith&\$250\\\hline A91&Sue&Smith&\$300\\\hline 
\end{tabular}
}

The fact that {\tt T1} and {\tt T2} are simply projections of {\tt T} is a result of our choice of translation $F$. 

\end{example}

\begin{remark}

We have seen that {\bf the pullback functor $\Delta_F$}, which arises naturally for any translation $F$ between schemas, {\bf automatically produces projections}. 

\end{remark}

In the next two examples, we will explore the right and left pushforward migration functors induced by the translation $F\taking\mcC\to\mcD$ given in Diagram (\ref{dia:translation}). These functors, denoted $\Pi_F$ and $\Sigma_F$, send $\mcC$-instances to $\mcD$-instances. Thus we start with the instance $I$ (which was presented in Example \ref{ex:two fact schema}) and explain its pushforwards $\Pi_F(I)$ and $\Sigma_F(I)$ below in Examples \ref{ex:right pushforward} and \ref{ex:left pushforward}, respectively. 
%

\begin{example}[Right Pushforward]\label{ex:right pushforward}

Let $F\taking\mcC\to\mcD$ be the translation given in Diagram (\ref{dia:translation}). In this example, we explore the data migration functor $\Pi_F\taking\mcC\inst\to\mcD\inst$\footnote{We have not defined $\Pi_F$ yet; this will be done in Section \ref{sec:right pushforward}.} by applying it to the $\mcC$-instance $I$ shown in Displays (\ref{dia:facts}) and (\ref{dia:leaves}). Note that our migration functor $\Pi_F$ points in the same direction as $F$: it takes $\mcC$-instances to $\mcD$-instances. We now describe the $\mcD$-instance $\Pi_F(I)$, which has four leaf tables $\Pi_F(I)({\tt SSN})$, etc., and one fact table $\Pi_F(I)({\tt T})$.

The four leaf tables of $\Pi_F(I)$ will be as in Display (\ref{dia:leaves}). The fact table of $\Pi_F(I)$ will be the join of {\tt T1} and {\tt T2}: 
\begin{center}

\begin{tabular}{| l || l | l | l | l |}\bhline\multicolumn{5}{| c |}{{\tt T}}\\\bhline {\bf ID}&{\bf SSN}&{\bf First}&{\bf Last}&{\bf Salary}\\\bbhline  T1-002T2-A104&122-988&Sue&Smith&\$300\\\hline T1-003T2-A101&198-877&Alice&Jones&\$100\\\bhline
\end{tabular}

\end{center}

\end{example}

\begin{remark}

We have seen that {\bf the right pushforward functor $\Pi_F$}, which arises naturally for any translation $F$ between schemas, {\bf automatically produces joins}.

\end{remark}

\begin{example}[Left Pushforward]\label{ex:left pushforward}

Let $F\taking\mcC\to\mcD$ be the translation given in Diagram (\ref{dia:translation}). In this example, we explore the data migration functor $\Sigma_F\taking\mcC\inst\to\mcD\inst$\footnote{We have not defined $\Sigma_F$ yet; this will be done in Section \ref{sec:left pushforward}.} by applying it to the $\mcC$-instance $I$ shown in Displays (\ref{dia:facts}) and (\ref{dia:leaves}). Note that our migration functor $\Sigma_F$ points in the same direction as $F$: it takes $\mcC$-instances to $\mcD$-instances. We now describe the $\mcD$-instance $\Sigma_F(I)$, which has four leaf tables $\Sigma_F(I)({\tt SSN})$, etc., and one fact table $\Sigma_F(I)({\tt T})$.

Instead of being a join, as in the case of $\Pi_F(I)$ above, the fact table ${\tt T}$ in instance $\Sigma_F(I)$ will be the union of {\tt T1} and {\tt T2}. One may wonder then how the category theoretic construction will deal with the fact that records in {\tt T1} do not have salary information and the records in {\tt T2} do not have SSN information. The answer is that the respective cells are {\em Skolemized}. In other words, the universal answer is to simply add a brand new ``variable" wherever one is needed in and downstream of {\tt T}. Thus in instance $\Sigma_F(I)$, table {\tt T} looks like this:

\begin{center}
\begin{tabular}{| l || l | l | l | l |}\bhline\multicolumn{5}{| c |}{{\tt T}}\\\bhline {\bf ID}&{\bf SSN}&{\bf First}&{\bf Last}&{\bf Salary}\\\bbhline  T1-001&115-234&Bob&Smith&T1-001.Salary\\\hline T1-002&122-988&Sue&Smith&T1-002.Salary\\\hline T1-003&198-877&Alice&Jones&T1-003.Salary\\\hline T2-A101&T2-A101.SSN&Alice&Jones&\$100\\\hline T2-A102&T2-A102.SSN&Sam&Miller&\$150\\\hline T2-A104&T2-A104.SSN&Sue&Smith&\$300\\\hline T2-A110&T2-A110.SSN&Carl&Pratt&\$200 \\\bhline
\end{tabular}
\end{center}

The Skolem variables (such as T1-001.Salary) can be equated with actual values later. They can also be equated with each other; for example we may know that T1-001.Salary=T2-002.Salary, without knowing the value of these salaries.

\end{example}

\begin{remark}

We have seen that {\bf the left pushforward functor $\Sigma_F$}, which arises naturally for any translation $F$ between schemas, {\bf automatically produces unions and automatically Skolemizes unknown values}.

\end{remark}

\subsection{Updatable views and linked multi-views}

Suppose we have a translation $F\taking\mcC\to\mcD$. In this case we can consider $\mcD$ as a view on $\mcC$ and consider $\mcC$ a view on $\mcD$. Unlike the classical version of views, our definition allows for arbitrarily many foreign keys between view tables; indeed, both $\mcC$ and $\mcD$ can be arbitrary schemas. Typical relational databases management systems such as SQL do not support ``linked multi-views", i.e. multiple view tables with foreign keys between them. For our data migration functors $\Pi_F,\Sigma_F$ and $\Delta_F$, this is no problem.

In fact, by the very nature of these three migration functors (i.e. by definition of the fact that they are functors), we have access to powerful theorems relating updates of $\mcC$-instances to updates of $\mcD$-instances. For example, given an instance $I$ on $\mcD$ whose $\Delta_F$-view is the instance $J=\Delta_F(I)$ on $\mcC$, and given an update $J\to J'$ on $\mcC$, there is a unique update $I\to\Pi_FJ'$, of instances on $\mcD$. A similar result holds for $\Sigma_F$ in place of $\Pi_F$: these facts follow from the fact that $\Pi$ (respectively $\Sigma_F$) is ``adjoint" to $\Delta_F$. See Section \ref{sec:data migration}.

The view update problem is often phrased as asking that ``the round trips are equivalences," (\cite{BCFGP}), which for us amounts to the composites $\Delta_F\Pi_F$ and $\Pi_F\Delta_F$ (respectively $\Sigma_F\Delta_F$ and $\Delta_F\Sigma_F$) being isomorphisms. This will only happe in case $F$ is an equivalence of categories. However, our data migration adjunctions provide view updates in more general circumstances, and these have provable formal properties (e.g. $\Sigma_F$ and $\Delta_F$ commute with inserts and $\Pi_F$ and $\Delta_F$ commute with deletes). But of course the best formal properties occur when $F$ is an equivalence.

The following example shows two things. First, it gives an example of a linked multi-view (foreign keys between views). Second, the translation $F$ is an equivalence of categories (a fact which relies essentially on the fact that $\mcC$ has path equivalences declared), and so the data migration functors $\Delta, \Pi$, and $\Sigma$ are also equivalences---they exhibit no information loss.

\begin{example}\label{ex:isomorphic objects}

Consider the two schemas drawn here: 

~\vspace{.2in}

\begin{align}\label{dia:equivalence translation}\mcC:=\parbox{1.6in}{\vspace{-.4in}\fbox{\parbox{1.5in}{\begin{center}\ul{$i_{12}i_{21}\simeq\id_{{\tt T1}}$}\\\ul{$i_{21} i_{12}\simeq\id_{{\tt T2}}$}\end{center}\xymatrix{&\LTO{SSN}\\&\LTO{First}\\\color{red}{\LTO{T1}}\ar@/_1pc/[rr]_{i_{12}}\ar[uur]\ar[ur]\ar[dr]\ar@{}[rr]|{\simeq}&&\color{red}{\LTO{T2}}\ar@/_1pc/[ll]_{i_{21}}\ar[ul]\ar[dl]\ar[ddl]\\&\LTO{Last}\\&\LTO{Salary}}}}}\Too{F}\parbox{1in}{\fbox{\xymatrix{&\LTO{SSN}\\&\LTO{First}\\\color{red}{\LTO{T}}\ar[uur]\ar[ur]\ar[dr]\ar[ddr]\\&\LTO{Last}\\&\LTO{Salary}}}}=:\mcD\end{align}

The arrows $i_{12}$ and $i_{21}$, which are declared to be mutually inverse, ensure that the data which can be captured by schema $\mcC$ is equivalent to that which can be captured by schema $\mcD$. The translation $F$ sends $i_{12}$ and $i_{21}$ to the trivial path $\id_{\tt T}$ on {\tt T} (see Definitions \ref{def:graph} and \ref{def:translation}, and Example \ref{ex:follow up on equivalence translation}).

If table {\tt T} is as in Example \ref{ex:pullback} (\ref{dia:tableT}), then its pullback under $\Delta_F$ to an instance on $\mcC$ is similar to (\ref{dia:pullback facts}), but with additional columns $i_{12}$ and $i_{21}$ (because our schema has additional arrows $i_{12}$ and $i_{21}$):
\begin{align}\label{dia:equivalence pullback}
\text{\scriptsize\begin{tabular}{| l || l | l | l | l |}\bhline\multicolumn{5}{| c |}{{\tt T1}}\\\bhline {\bf ID}&{\bf SSN}&{\bf First}&{\bf Last}&{\bf $i_{12}$}\\\bbhline XF667T1&115-234&Bob&Smith&A67\\\hline XF891T1&122-988&Sue&Smith&A91\\\hline XF221T1&198-877&Alice&Jones&A21\\\bhline
\end{tabular}}\hsp
\text{\scriptsize\begin{tabular}{| l || l | l | l | l |}\bhline\multicolumn{5}{| c |}{{\tt T2}}\\\bhline {\bf ID}&{\bf First}&{\bf Last}&{\bf Salary}&{\bf $i_{21}$}\\\bbhline A21&Alice&Jones&\$100&XF221T1\\\hline A67&Bob&Smith&\$250&XF667T1\\\hline A91&Sue&Smith&\$300&XF891T1\\\bhline
\end{tabular}}
\end{align}
The foreign key columns $i_{12}$ and $i_{21}$ on the $\mcC$-view keep track of the data necessary for successful round-tripping. An update to a $\mcD$-instance will yield a corresponding update to a $\mcC$-instance and vice versa. The fact that $F$ is an {\em equivalence of categories} implies that $\Sigma_F,\Pi_F,$ and $\Delta_F$ are also equivalences of categories, and roundtrip isomorphisms will hold for all possible updates.

\end{example}

\subsection{Interoperability with RDF data}\label{sec:RDF}

The {\em Resource Descriptive Framework (RDF)} is the semantic web standard data format \cite{KC}. The basic idea is to encode all facts in terms of basic \begin{center}(Subject Predicate Object)\end{center} triples, such as (Bob hasMother Sue). There are papers devoted to understanding the transformation from relational databases to RDF triple stores, and vice versa (\cite{ASX},\cite{KT}). In this section we will assume a basic familiarity with the jargon of that field, such as {\em URI} (uniform resource identifier).

Category-theoretically, the formulation of RDF triple stores is quite simple. Given a schema $\mcC$, a triple store over $\mcC$ is a category $\mcS$ (representing the triples) and a functor $\pi\taking\mcS\to\mcC$ (representing their types). The objects in $\mcS$ are URIs; the arrows in $\mcS$ are triples $$\LTO{Subject}\To{\tn{Predicate}}\LTO{Object}$$ Given an object $c\in\mcC$ in the schema, the inverse-image $\pi^\m1(c)\ss\mcS$ consists of all URIs of type $c$. Given an arrow $f\taking c\to c'$ in $\mcC$, the inverse image is the $f$-relation between $\pi^\m1(c)$ and $\pi^\m1(d)$.\footnote{This relation can be functional or inverse functional, as dictated by the RDF schema; the subject can be understood category-theoretically by so-called ``lifting constraints" (see \cite{Sp4}).}

There is a basic category-theoretic operation that converts a relational database instance into an RDF triple store (and a straightforward inverse as well, converting an RDF triple store into a relational database instance). It is called the {\em Grothendieck construction}. Consider for example the instance $I$ from Example \ref{ex:famous schema} Display (\ref{dia:employee}): 
\begin{align*}&\begin{tabular}{| l || l | l | l | l |}\bhline\multicolumn{5}{| c |}{{\tt Employee}}\\\bhline {\bf ID}&{\bf First}&{\bf Last}&{\bf Mgr}&{\bf isIn}\\\bbhline 101&David&Hilbert&103&q10\\\hline 102&Bertrand&Russell&102&x02\\\hline 103&Alan&Turing&103&q10\\\bhline\end{tabular}&\begin{tabular}{| l || l | l |}\bhline\multicolumn{3}{| c |}{{\tt Department}}\\\bhline {\bf ID}&{\bf Name}&{\bf Secr}\\\bbhline q10&Sales&101\\\hline x02&Production&102\\\bhline\end{tabular}\end{align*}
Taking the Grothendieck construction yields the following triple store $\mcS=Gr(I)$, where each arrow designates a RDF triple, as above:
\begin{align}\label{dia:Grothendieck}
&~\hspace{-.2in}\mcS=\parbox{2.6in}{\fbox{\xymatrix@=.1pt{\LTO{101}\ar@/_1.5pc/[ddddddd]+<-3pt,3pt>_{\tin{First}}\ar@/_1.5pc/[dddddrr]+<-4pt,0pt>^{\tin{Last}}\ar@/_1pc/[rr]+<-3pt,-3pt>_{\tin{Mgr}}\ar@/^1.3pc/[rrrrr]^{\tin{isIn}}&\LTO{102}&\LTO{103}&&&\LTO{q10}&\LTO{x02}\ar@/^1.5pc/[lllll]_{\tin{Secr}}\ar@/^1pc/[ldddddd]^{\tin{Name}}\\\\\\\\\\\LTO{Alan}&&\LTO{Hilbert}&\hsp&\hsp&\LTO{Production}\\\LTO{Bertrand}&&\LTO{Russell}&&&\LTO{Sales}\\\LTO{\hspace{.2in}David}&&\LTO{Turing}}}\\\xymatrix{\hspace{1.3in}&\ar[d]^\pi\\&~}}\\
\nonumber&\mcC=\parbox{2.9in}{\hspace{.1in}\fbox{
			\xymatrix@=4pt{&\LTO{Employee}\ar@<.5ex>[rrrrr]^{\tn{isIn}}\ar@(l,u)[]+<5pt,10pt>^{\tn{Mgr}}\ar[dddl]_{\tn{First}}\ar[dddr]^{\tn{Last}}&&&&&\LTO{Department}\ar@<.5ex>[lllll]^{\tn{Secr}}\ar[ddd]^{\tn{Name}}\\\\\\\LTO{String1}&&\LTO{String2}&~&~&~&\LTO{String3}
			}
		}}\vspace{1in}
\end{align}
In Display (\ref{dia:Grothendieck}), ten arrows have been left out of the picture of $S$, (e.g. the arrow $\LTO{102}\Too{\tn{Last}}\LTO{Russell}$ is not pictured) for readability reasons. The point is that the RDF triple store associated to instance $I$ is nicely represented using the standard Grothendieck construction.

\subsection{Close connection between data and program}\label{sec:data and program}

Currently, there is an ``impedance mismatch" between databases and programming languages; their respective formulations and underlying models do not cohere as well as they should (\cite{CI}). Whereas the programming languages (PL) community has embraced category theory for the conceptual clarity and expressive power it brings, most database theorists tend to concentrate on practical considerations, such as speed, reliability, and scalability. The importance of databases in the modern world cannot be overstated, and yet in order for databases to reach their full potential, better theoretical integration with applications must be developed.

As stated in the Introduction (Section \ref{sec:intro}), the first goal of this paper is to present a straightforward model of databases under which every theorem about small categories becomes a theorem about databases. Thus the favorite category of PL theorists, namely the category $\Type$ of types and terms (for some fixed $\lambda$-calculus, see \cite[Section 6.5]{Awo}), is a kind of infinite database schema: its tables correspond to types and its foreign key columns correspond to terms. Of course, unlike real-world databases in which tables model real-world entities and their relationships (such as people and their heights), the schema $\Type$ models mathematical entities and their relationships (such as integers and their factorials). However, these ideas clearly live in the same platonic realm, so to speak, and this notion is expressed by saying that both database schemas and $\Type$ can be considered as categories and related by functors.

This leads to nice integration between data and program. For example many spreadsheet capabilities, such summing up the values in two columns to get the value in a third, can be included at the schema level. At this point in the paper, we do not yet have the necessary machinery to show exactly how that should work (see Section \ref{sec:DT as NT}), but in the following diagram one can see the schematic presentation of the relevant subcategory of $\Type$: \begin{align}\label{dia:plus}P:=\fbox{\xymatrix{\LTO{(Int, Int)}\ar[rr]^{+}\ar@/^1.5pc/[rr]^{\tn{outl}}\ar@/_1.5pc/[rr]_{\tn{outr}}&&\LTO{Int}}}\end{align} The point is to simultaneously see two different things within this one diagram (like an optical illusion). The first thing to see in Diagram (\ref{dia:plus}) is a database schema. In schema $P$, we have two tables: 
\begin{align*}&\footnotesize\begin{tabular}{| l || l | l | l |}\bhline\multicolumn{4}{| c |}{{\tt (Int,Int)}}\\\bhline {\bf ID}&{\bf outl}&{\bf outr}&{\bf +}\\\bbhline P0c0&0&0&0\\\hline P1c0&1&0&1\\\hline P1c1&1&1&2\\\hline P0c1&0&1&1\\\hline P2c0&2&0&2\\\hline P2c1&2&1&3\\\hline P2c2&2&2&4\\\hline P1c2&1&2&3\\\hline P0c2&0&2&2\\\hline\vdots&\vdots&\vdots&\vdots\\\bhline\end{tabular}&
\begin{tabular}{| l ||}\bhline\multicolumn{1}{| c |}{{\tt Int}}\\\bhline {\bf ID}\\\bbhline 0\\\hline 1\\\hline 2\\\hline 3\\\hline 4\\\hline \vdots\\\bhline\end{tabular}\end{align*}\normalsize
The second thing to see in Diagram (\ref{dia:plus}) is a subcategory $P\ss\Type$, i.e. a close connection to standard PL theory. The same category $P$ is viewed extensionally in the context of databases and intentionally in the context of programs. This dual citizenship of categories makes category theory a good candidate for solving the impedance mismatch between databases and programming languages.

\section{Definitions}\label{sec:definitions}

In this section, our main goal is to define a category of schemas and translations, and to show that it is equivalent to $\Cat$, the category of small categories. Along the way we will define the category of instances on a given schema. Finally, we will give a dictionary that one can use to translate between database concepts (e.g. found in \cite{EN}) and category-theoretic concepts (e.g. found in \cite{Mac}).

\subsection{Some references}\label{sec:references}

Throughout this section, we will assume the reader has familiarity with the fundamental notions of category theory: objects, morphisms, and commutative diagrams within a category; as well as categories, functors, and natural transformations. There are many good references on category theory, including \cite{LS}, \cite{Sic}, \cite{Pie}, \cite{BW1}, \cite{Awo}, and \cite{Mac}; the first and second are suited for general audiences, the third and fourth are suited for computer scientists, and the fifth and sixth are suited for mathematicians (in each class the first reference is easier than the second). One may also see \cite{SK} for a different perspective.

\subsection{Graphs, Paths, Schemas, and Instances}

A graph (sometimes called a directed multi-graph) is a collection of vertices and arrows, looking something like this: \begin{align}\label{dia:graph}\parbox{1.5in}{\fbox{\xymatrix{\LTO{A}\ar[r]^f&\LTO{B}\ar@/_1pc/[r]_h\ar@/^1pc/[r]^g&\LTO{C}\\\LTO{D}\ar@(l,u)[]^i\ar@/^1pc/[r]^j&\LTO{E}\ar@/^1pc/[l]^k}}}\end{align} This is one graph with two connected components; it has five vertices and six arrows. 

\begin{definition}\label{def:graph}

A {\em graph} $G$ is a sequence $G=(A,V,src,tgt)$, where $A$ and $V$ are sets (respectively called {\em the set of arrows} and {\em the set of vertices} of $G$), and $src\taking A\to V$ and $tgt\taking A\to V$ are functions (respectively called {\em the source function} and {\em the target function} for $G$). If $a\in A$ is an arrow with source $src(a)=v$ and target $tgt(a)=w$, we draw it as $$v\Too{a}w.$$

\end{definition}

\begin{definition}

Let $G=(A,V,src,tgt)$ be a graph. A {\em path of length $n$} in $G$, denoted $p\in\Path_G^{(n)}$ is a head-to-tail sequence \begin{align}\label{dia:path}p=(v_0\To{a_1}v_1\To{a_2}v_2\To{a_3}\ldots\To{a_n}v_n)\end{align} of arrows in $G$. In particular, $\Path_G^{(1)}=A$ and $\Path_G^{(0)}=V$; we refer to the path of length 0 on vertex $v$ as the {\em trivial path on $v$} and denote it by $\id_v$. We denote by $\Path_G$ the set of all paths on $G$, $$\Path_G:=\bigcup_{n\in\NN}\Path_G^{(n)}.$$ Every path $p\in\Path_G$ has a source vertex and a target vertex, and we may abuse notation and write $src,tgt\taking\Path_G\to V$. If $p$ is a path with $src(p)=v$ and $tgt(p)=w$, we may denote it by $p\taking v\to w$. Given two vertices $v,w\in V$, we write $\Path_G(v,w)$ to denote the set of all paths $p\taking v\to w$.

There is a composition operation on paths. Given a path $p\taking v\to w$ and $q\taking w\to x$, we define the composition, denoted $pq\taking v\to x$ in the obvious way. In particular, if $p$ (resp. $r$) is the trivial path on vertex $v$ (resp. vertex $w$) then for any path $q\taking v\to w$, we have $pq=q$ (resp. $qr=q$). Thus, for clarity, we may always denote a path as beginning with a trivial path on its source vertex; e.g. the path $p$ from Diagram (\ref{dia:path}) may be denoted $p=\id_{v_0}a_1a_2\cdots a_n$.

\end{definition}

\begin{example}

In Diagram (\ref{dia:graph}), there are no paths from $A$ to $D$, one path ($f$) from $A$ to $B$, two paths ($fg$ and $fh$) from $A$ to $C$, and infinitely many paths $\{i^{p_1}(jk)^{q_1}\cdots i^{p_n}(jk)^{q_n}\;|\;n,p_1,q_1,\ldots,p_n,q_n\in\NN\}$) from $D$ to $D$.

\end{example}

We now define the notion of categorical equivalence relation on the set of paths of a graph. Such an equivalence relation (in addition to being reflexive, symmetric, and transitive) has two sorts of additional properties: equivalent paths have the same source and target, and the composition of equivalent paths with other equivalent paths must yield equivalent paths. Formally we have Definition \ref{def:CPER}.

\begin{definition}\label{def:CPER}

Let $G=(A,V,src,tgt)$ be a graph. A {\em categorical path equivalence relation} (or {\em CPER}) on $G$ is an equivalence relation $\simeq$ on $\Path_G$ that has the following properties: 
\begin{enumerate}
\item If $p\simeq q$ then $src(p)=src(q)$.
\item If $p\simeq q$ then $tgt(p)=tgt(q)$.
\item Suppose $p,q\taking b\to c$ are paths, and $m\taking a\to b$ is an arrow. If $p\simeq q$ then $mp\simeq mq$. 
\item Suppose $p,q\taking a\to b$ are paths, and $n\taking b\to c$ is an arrow. If $p\simeq q$ then $pn\simeq qn$.
\end{enumerate}
\end{definition}

\begin{lemma}

Suppose that $G$ is a graph and $\simeq$ is a CPER on $G$. Suppose $p\simeq q\taking a\to b$ and $r\simeq s\taking b\to c$. Then $pr\simeq qs$.

\end{lemma}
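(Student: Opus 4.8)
The plan is to first bootstrap the two ``whiskering'' axioms (3) and (4) of Definition \ref{def:CPER}, which only permit composition with a single arrow, up to composition with an arbitrary path, and then to chain the two resulting statements together using transitivity of $\simeq$.

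First I would prove the following extension of axiom (3): if $p \simeq q \taking b \to c$ and $w \taking a \to b$ is any \emph{path}, then $wp \simeq wq$. This goes by induction on the length $n$ of $w$. When $n = 0$ the path $w$ is the trivial path $\id_a$ (so $a = b$), and $wp = p \simeq q = wq$ by hypothesis. For the inductive step, write $w = m w'$ where $m \taking a \to a'$ is an arrow and $w' \taking a' \to b$ is a path of length $n - 1$; the inductive hypothesis gives $w'p \simeq w'q$, and applying axiom (3) to the arrow $m$ and these two paths yields $m(w'p) \simeq m(w'q)$, i.e.\ $wp \simeq wq$. A symmetric induction on path length, using axiom (4) in place of (3), proves the dual statement: if $p \simeq q \taking a \to b$ and $w \taking b \to c$ is any path, then $pw \simeq qw$.

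With these two facts in hand, the lemma follows by a two-step comparison through the intermediate path $qr \taking a \to c$. Since $r \taking b \to c$ is a path and $p \simeq q \taking a \to b$, the dual (axiom-(4)) statement gives $pr \simeq qr$. Since $q \taking a \to b$ is a path and $r \simeq s \taking b \to c$, the (axiom-(3)) statement gives $qr \simeq qs$. Transitivity of $\simeq$ then yields $pr \simeq qr \simeq qs$, hence $pr \simeq qs$, as desired. Note that axioms (1) and (2) guarantee that $p,q$ share the endpoints $a,b$ and that $r,s$ share $b,c$, so every composite written above is in fact defined.

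The only real work is the inductive bootstrapping in the first step; once whiskering by arbitrary paths is available on each side, combining the two sides is a routine application of transitivity. I expect the main obstacle to be purely bookkeeping: because the axioms as stated whisker only by a single arrow, the induction must strip off one arrow at a time (relying on the definition of path composition) rather than invoking a ``whiskering by a path'' rule all at once.
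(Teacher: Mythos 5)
Your proof is correct and follows essentially the same strategy as the paper's: bootstrap the single-arrow whiskering axioms (3) and (4) to whiskering by whole paths via induction on path length, then chain the two resulting equivalences by transitivity. The only (immaterial) difference is that you pass through the intermediate path $qr$ where the paper passes through $ps$, and your write-up makes the induction more explicit than the paper's terse "applying condition (3) to each arrow in path $p$."
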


\begin{proof}

The picture to have in mind is this: $$\xymatrix@=13pt{&\bullet\ar[r]&\cdots\ar[r]&\bullet\ar[dr]&&\bullet\ar[r]&\cdots\ar[r]&\bullet\ar[dr]\\\LMO{a}\ar@{}[rrrr]|{\simeq}\ar[ur]\ar[dr]\ar@{-->}@/^1.5pc/[rrrr]_p\ar@{-->}@/_1.5pc/[rrrr]^q&&&&\LMO{b}\ar@{}[rrrr]|{\simeq}\ar[ur]\ar[dr]\ar@{-->}@/^1.5pc/[rrrr]_r\ar@{-->}@/_1.5pc/[rrrr]^s&&&&\LMO{c}\\&\bullet\ar[r]&\cdots\ar[r]&\bullet\ar[ur]&&\bullet\ar[r]&\cdots\ar[r]&\bullet\ar[ur]}$$ Applying condition (3) from Definition \ref{def:CPER} to each arrow in path $p$, it follows by induction that $pr\simeq ps$. Applying condition (4) to each arrow in path $s$, it follows similarly that $ps\simeq qs$. Because $\simeq$ is an equivalence relation, it follows that $pr\simeq qs$. 

\end{proof}

\begin{definition}\label{def:categorical schema}

A {\em categorical schema} $\mcC$ consists of a pair $\mcC:=(G,\simeq)$ where $G$ is a graph and $\simeq$ is a categorical path equivalence relation on $G$. We sometimes refer to a categorical schema as simply a {\em schema}.

\end{definition}

\begin{example}\label{ex:self email}

Consider the schema, i.e. the graph together with the indicated equivalence\footnote{More precisely, consider the graph with the categorical equivalence relation generated by the set $\{fg=fh\}$.}, pictured in the box below: $$\mcC:=\fbox{\parbox{1.2in}{\begin{center}\ul{$fg\simeq fh$}\end{center}~\\\xymatrix{\LTO{A}\ar[r]^f&\LTO{B}\ar@/^1pc/[r]^g\ar@/_1pc/[r]_h&\LTO{C}}}}$$ This schema models, for example, the phenomenon of sending an email to oneself. Suppose we populate $B$ with emails, $C$ with people, $g$ and $h$ with the sender and receiver fields, respectively. Then for $fg$ to equal $fh$ we must have that senders equal receivers on the image of $f$, and thus the subset of self-emails is a perfect fit for {\tt A}. See example \ref{ex:instance of self-email}.

\end{example}

More on the subject of categorical schemas, including a picture of a schema and an associated set of tables, can be found in Example \ref{ex:famous schema}. 

In the following, we will define what it means to be an instance of a categorical schema $\mcC$. We consider the case in which our instances are set-models of $\mcC$, but the same idea works in much more generality (see Definition \ref{def:category of instances}).

\begin{definition}\label{def:instance}

Let $\mcC:=(G,\simeq)$ be a categorical schema, where $G=(A,V,src,tgt)$. An {\em instance on $\mcC$}, denoted $I$, consists of the following 
\begin{enumerate}
\item For every vertex $v\in V$, a set $I(v)$.
\item For every arrow $a\taking v\to v'$ in $A$, a function $I(a)\taking I(v)\to I(v')$
\item For every path equivalence $p\simeq q$ a guarantee that the equation $I(p) = I(q)$ holds.\footnote{Once $I$ is defined on arrows, as it is in item (2), we can extend it to paths in the obvious way: if $p=a_1a_2\cdots a_n$, then the function $I(p)$ is the composition $I(p)=I(a_1)I(a_2)\cdots I(a_n)$.}

\end{enumerate}

\end{definition}

\begin{example}\label{ex:instance of self-email}

We now return to Example \ref{ex:self email}, and write down a sample instance $I$ for schema $\mcC=(G,\simeq)$.
\begin{align*}
\begin{tabular}{| l || l |}\bhline\multicolumn{2}{| c |}{{\tt A}}\\\bhline {\bf ID}&{\bf $f$}\\\bbhline SEm1207&Em1207\\\hline SEm1210&Em1210\\\hline SEm1211&Em1211\\\bhline\end{tabular}&\hsp
\begin{tabular}{| l || l | l |}\bhline\multicolumn{3}{| c |}{{\tt B}}\\\bhline {\bf ID}&{\bf $g$}&{\bf $h$}\\\bbhline Em1206&Bob&Sue\\\hline Em1207 &Carl&Carl\\\hline Em1208&Sue & Martha\\\hline Em1209&Chris&Bob\\\hline Em1210&Chris&Chris\\\hline Em1211&Julia&Julia\\\hline Em1212&Martha&Chris\\\bhline\end{tabular}\hsp
\begin{tabular}{| l ||}\bhline\multicolumn{1}{| c |}{{\tt C}}\\\bhline {\bf ID}\\\bbhline Bob\\\hline Carl\\\hline Chris\\\hline Julia\\\hline Martha\\\hline Sue\\\bhline\end{tabular}
\end{align*}\normalsize 
For each vertex $v$ in $G$, the set $I(v)$ is given by the set of rows in the corresponding table (e.g. $I({\tt A})=\{\text{SEm1207, SEm1210, SEm1211}\}$). For each arrow $a\taking v\to w$ in $G$ the function $I(a)\taking I(v)\to I(w)$ is also evident as a column in the table. For example, $I(g)\taking I({\tt B})\to I({\tt C})$ sends Em1206 to Bob, etc. Finally, the path equivalence $f g=f h$ is borne out in the fact that for every row-id in table {\tt A}, following $f$ then $g$ returns the same result as following $f$ then $h$.
\end{example}

\subsection{Translations}

A translation is a mapping from one categorical schema to another. Vertices are sent to vertices, arrows are sent to paths, and all path equivalences are preserved. More precisely, we have the following definition.

\begin{definition}\label{def:translation}

Let $G=(A_G,V_G,src_G,tgt_G)$ and $H=(A_H,V_H,src_H,tgt_H)$ be graphs (see Definition \ref{def:graph}), and let $\mcC=(G,\simeq_\mcC)$ and $\mcD=(H,\simeq_\mcD)$ be categorical schemas. A {\em translation $F$ from $\mcC$ to $\mcD$}, denoted $F\taking\mcC\to\mcD$ consists of the following constituents:
\begin{enumerate}[\hspace{.6cm}(1)]
\item a function $V_F\taking V_G\to V_H$, and
\item a function $A_F\taking A_G\to\Path_H$
\end{enumerate}
subject to the following conditions:
\begin{enumerate}[\hspace{.6cm}(a)]
\item the function $A_F$ preserves sources and targets; in other words, the following diagrams of sets commute: $$\xymatrix{A_G\ar[r]^{A_F}\ar[d]_{src_G}&\Path_H\ar[d]^{src_H}\\V_G\ar[r]_{V_F}&V_H}\hspace{.4in} \xymatrix{A_G\ar[r]^{A_F}\ar[d]_{tgt_G}&\Path_H\ar[d]^{tgt_H}\\V_G\ar[r]_{V_F}&V_H}$$
\item the function $A_F$ preserves path equivalences.\footnote{This is easier to understand conceptually than to write down.} Precisely, suppose we are given lengths $m,n\in\NN$ and paths $p=id_{v_0} f_1 f_2 \cdots f_m$ and $q=id_{v_0} g_1 g_2 \cdots g_n$ in $G$. Let $v'_0=V_F(v_0)$ and for each $i\leq m$ (resp. $j\leq n$), let $f'_i=A_F(f_i)$ (resp. $g'_j=A_F(g_j)$), and let $p'=\id_{v'_0} f'_1 f'_2 \cdots f'_m$ (resp. $q'=\id_{v'_0} g'_1 g'_2 \cdots g'_n$). If $p\simeq_\mcC q$ then $p'\simeq_\mcD q'$. 
\end{enumerate}

Two translations $F,F'\taking\mcC\to\mcD$ are considered identical if they agree on vertices (i.e. $V_F=V_{F'}$) and if, for every arrow $f$ in $\mcC$, there is a path equivalence $$A_F(f)\simeq_\mcD A_{F'}(f).$$

\end{definition}

In the following two examples we will reconsider translations discussed in Section \ref{sec:examples}.

\begin{example}

Recall the mapping $F$ given in Diagram (\ref{dia:translation}). The schemas $\mcC$ and $\mcD$ are just graphs in the sense that there are no declared path equivalences in either of them. The mapping $F$ sends vertices in $\mcC$ to vertices in $\mcD$ and arrows in $\mcC$ to arrows in $\mcD$. Since an arrow is a particular sort of path, and since there are no path equivalences to be preserved, $F\taking\mcC\to\mcD$ is indeed a translation.

\end{example}

\begin{example}\label{ex:follow up on equivalence translation}

Recall the mapping $F$ given in Example \ref{ex:isomorphic objects}, Diagram (\ref{dia:equivalence translation}). In this setup, $\mcC$ has declared path equivalences and $\mcD$ does not; however $\mcD$ still has a categorical path equivalence relation $\simeq_\mcD$ on it, the minimal reflexive relation. The mapping $F$ on vertices ($V_F$) is self-explanatory; the only arrows on which $A_F$ is not self-explanatory are $i_{12}$ and $i_{21}$, both of which are sent to the trivial path $\id_{\tt T}$ on vertex {\tt T}. 

Because $V_F({\tt T1})=V_F({\tt T2})={\tt T}$, it is clear that $A_F$ preserves sources and targets. The path equivalence $i_{12} i_{21}=\id_{\tt T1}$ and $i_{21} i_{12}=\id_{\tt T2}$ are preserved because $A_F(i_{12})=A_F(i_{21})=\id_{\tt T}$, and the concatenation of a trivial path with any path $p$ yields $p$.
\end{example}

\subsection{The equivalence $\Sch\simeq\Cat$}

We assume familiarity with categories and functors, and in particular the category $\Cat$ of small categories and functors (a list of references is given in Section \ref{sec:references}). In this section we will define the category $\Sch$ and show it is equivalent to $\Cat$. It is this result that justifies our advertisement in the introduction that ``every theorem about small categories becomes a theorem about databases".

\begin{definition}

Recall the notions of categorical schemas and translations from Definitions \ref{def:categorical schema} and \ref{def:translation}. The {\em category of categorical schemas}, denoted $\Sch$, is the category whose objects are categorical schemas and whose morphisms are translations.

\end{definition}

{}

\begin{construction}[From schema to category]

We will define a functor $L\taking\Sch\to\Cat$. Let $\mcC=(G,\simeq_\mcC)$ be a categorical schema, where $G=(A,V,src,tgt)$. Define $\mcC'$ to be the free category with objects $V$ generated by arrows $A$. Define $L(\mcC)\in\Cat$ to be the category defined as the quotient of $\mcC'$ by the equivalence relation $\simeq_\mcC$ (see \cite[Section 2.8]{Mac}). This defines $L$ on objects of $\Sch$.

Given a translation $F\taking\mcC\to\mcD$, there is an induced functor on free categories $F'\taking\mcC'\to\mcD'$, sending each generator $f\in A$ to the morphism in $\mcD'$ defined as the composite of the path $A_F(f)$. The preservation of path equivalence ensures that $F'$ descends to a functor $L(F)\taking L(\mcC)\to L(\mcD)$ on quotient categories. This defines $L$ on morphisms in $\Sch$. It is clear that $L$ preserves composition, so it is a functor.

\end{construction}

\begin{construction}[From category to schema]

We will define a functor $R\taking\Cat\to\Sch$. Let $\mcC$ be a small category with object set $\Ob(\mcC)$, morphism set $\Mor(\mcC)$, source and target functions $s,t\taking\Mor(\mcC)\to\Ob(\mcC)$, and composition law $\circ\taking\Mor(\mcC)\cross_{\Ob(\mcC)}\Mor(\mcC)\to\Mor(\mcC)$. Let $R(\mcC)=(G,\simeq)$ where $G$ is the graph $$G=(\Mor(\mcC),\Ob(\mcC),s,t)$$ and with $\simeq$ defined as follows: for all $f,g\in\Mor(\mcC)$ with $t(f)=s(g)$ we put \begin{align}\label{dia:composition law}f g\simeq (g\circ f).\end{align} This defines $R$ on objects of $\Cat$.

A functor $F\taking\mcC\to\mcD$ induces a translation $R(F)\taking R(\mcC)\to R(\mcD)$, because vertices are sent to vertices, arrows are sent to arrows, and path equivalence is preserved by (\ref{dia:composition law}) and the fact that $F$ preserves the composition law. This defines $R$ on morphisms in $\Cat$. It is clear that $R$ preserves compositions, so it is a functor.

\end{construction}

\begin{theorem}\label{thm:equivalence of categories}

The functors $$\xymatrix{L\taking\Sch\ar@<.5ex>[r]&\Cat\!:R\ar@<.5ex>[l]}$$ are mutually inverse equivalences of categories.

\end{theorem}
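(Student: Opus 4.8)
The plan is to produce explicit natural isomorphisms $\epsilon\taking L R\Rightarrow\id_\Cat$ and $\eta\taking\id_\Sch\Rightarrow R L$; exhibiting these is exactly what it means for $L$ and $R$ to be mutually inverse equivalences, and no adjointness bookkeeping beyond this is needed. I would define each transformation componentwise, verify that each component is an isomorphism (of categories, resp. in $\Sch$), and then check naturality as a routine diagram chase using only that $L$ and $R$ are functors and that $L(F)$, $R(F)$ act as $F$ does on objects/vertices.

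First the counit. Given a small category $\mcC$, the schema $R(\mcC)$ has the morphisms of $\mcC$ as its arrows, so a path in $R(\mcC)$ is a head-to-tail string $f_1 f_2\cdots f_n$ of composable morphisms of $\mcC$. I would define $\epsilon_\mcC\taking L(R(\mcC))\to\mcC$ to be the identity on objects and to send the $\simeq$-class of such a path to the composite $f_n\circ\cdots\circ f_1$, with a trivial path sent to the identity morphism $\id_v$. This is well defined on $\simeq$-classes precisely because the generating relation $fg\simeq(g\circ f)$ of $R(\mcC)$ becomes an equality of composites, and it is a functor by construction. The real content is that $\epsilon_\mcC$ is a bijection on hom-sets. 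Surjectivity is immediate, since a morphism $h\taking v\to w$ is the image of the length-one path $h$. For injectivity I would first prove a \emph{normal form} lemma: every path of $R(\mcC)$ is $\simeq$-equivalent to a length-$\le 1$ representative, by an induction collapsing $f_1\cdots f_n$ to its total composite. Distinct such representatives then stay distinct automatically, because $\epsilon_\mcC$ is a $\simeq$-invariant fixing each length-one path. The genuinely delicate point — and the main obstacle — is matching identities: the trivial path on $v$ and the length-one arrow $\id_v$ both have total composite $\id_v$, so $\epsilon_\mcC$ collapses them, and injectivity forces one to check that they are already $\simeq$-equivalent in $R(\mcC)$. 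This is precisely the step that pins down the identity morphisms, and it is where the generating relations of $R(\mcC)$ must be used (or, if need be, supplemented) with care.

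Next the unit. Given a schema $\mcC=(G,\simeq)$, I would define the translation $\eta_\mcC\taking\mcC\to R(L(\mcC))$ to be the identity on vertices and to send an arrow $a\taking v\to w$ of $G$ to the length-one path whose single arrow is the morphism $[a]\in\Mor(L(\mcC))$, i.e. the $\simeq$-class of $a$ in the quotient category. Preservation of sources, targets, and path equivalences is immediate from the definition of $L$. To invert $\eta_\mcC$ in $\Sch$ I would write down the reverse translation $R(L(\mcC))\to\mcC$ sending a vertex to itself and an arrow of $R(L(\mcC))$ — which is a morphism of $L(\mcC)$, hence a $\simeq_\mcC$-class of some path $a_1\cdots a_m$ in $G$ — back to that path $a_1\cdots a_m$ in $\mcC$. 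This is well defined only up to $\simeq_\mcC$, but $\simeq_\mcC$ is exactly the relation under which translations are declared identical in Definition \ref{def:translation}; that identification is what makes the two round trips equal the respective identity translations on the nose in $\Sch$, so $\eta_\mcC$ is invertible there.

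Finally I would verify naturality of $\epsilon$ and $\eta$ in $\mcC$ — a direct check that the relevant squares commute, using only that $L(F)$ and $R(F)$ send generators to the images of their paths and agree with $F$ on objects/vertices. Assembling these gives $L R\cong\id_\Cat$ and $R L\cong\id_\Sch$, which is the assertion of Theorem \ref{thm:equivalence of categories}. I expect the normal-form/identity step for $\epsilon_\mcC$ to be the crux: once it is known that, modulo $\simeq$, the paths of $R(\mcC)$ reduce bijectively to the morphisms of $\mcC$ with trivial paths in the role of identities, everything else is bookkeeping made clean by the translation-identification on the $\Sch$ side.
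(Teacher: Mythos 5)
Your overall route---exhibiting an explicit counit $\epsilon\taking LR\Rightarrow\id_\Cat$ and unit $\eta\taking\id_\Sch\Rightarrow RL$ and checking each component is invertible---is a legitimate strategy, and a different one from the paper's, which instead proves $L$ essentially surjective and fully faithful and asserts the isomorphism $\mcC\iso L(R(\mcC))$ as ``clear.'' But the step you single out as the crux is not merely delicate: with the constructions as defined, it fails, so your proof does not close without repairing the definition of $R$. The congruence on $\Path_{R(\mcC)}$ is generated by the relations $fg\simeq(g\circ f)$, each of which relates a path of length $2$ to a path of length $1$; closing under the CPER operations (pre- and post-composition with arrows, then reflexive--symmetric--transitive closure) only ever relates paths of length $\geq 1$ to paths of length $\geq 1$. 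Hence the trivial path on $v$ is alone in its equivalence class, and in $L(R(\mcC))$ the categorical identity (class of the trivial path) is a \emph{different} morphism from the class of the length-one path on the arrow $\id_v\in\Mor(\mcC)$. Concretely, for $\mcC=\mathbf{1}$ the category $L(R(\mathbf{1}))$ is the two-element idempotent monoid $\{1,e\}$ with $e^2=e$, which is not even equivalent to $\mathbf{1}$. So your $\epsilon_\mcC$ is surjective but not injective on $\Hom(v,v)$, and the same defect blocks the round trip for your unit: the arrow of $R(L(\mcC))$ coming from an identity of $L(\mcC)$ is sent around the loop to a trivial path that is not $\simeq$-related to it.

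The repair is exactly what your parenthetical ``(or, if need be, supplemented)'' anticipates: the definition of $R(\mcC)$ must also impose, for every object $v$, that the length-one path on the arrow $\id_v$ is equivalent to the trivial path on $v$. This is a legitimate CPER generator (it respects sources and targets), it preserves the total-composite invariant, and with it your normal-form lemma extends to length $0$, so both of your round-trip verifications go through. You should state this as an amendment rather than leave it as a caveat, since without it $LR\not\iso\id_\Cat$ and the argument cannot close. For comparison, the paper's own sketch takes the other standard route, constructing a preimage translation for a functor $G\taking LX\to LY$ by choosing path representatives---morally the same choices as your inverse to $\eta$---but it buries the identity problem inside the unproved claim that $\mcC\iso L(R(\mcC))$; once the extra generator is added, your write-up is the more complete argument.
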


\begin{proof}[Sketch of proof]

It is clear that there is a natural isomorphism $\epsilon\taking\id_\Cat\to L\circ R$; i.e. for any category $\mcC$, there is an isomorphism $\mcC\iso L(R(\mcC))$. Thus the functor $L$ is essentially surjective. We first show that $L$ is fully faithful. 

Choose schemas $X$ and $Y$, and suppose $X=(A_X,V_X,src_X,tgt_X)$; we must show that the function $L_1\taking\Hom_{\Sch}(X,Y)\to\Hom_{\Cat}(LX,LY)$ is a bijection. It is clearly injective. To show that it is surjective, choose a functor $G\taking LX\to LY$; we will define a translation $F\taking X\to Y$ with $L_1(F)=G$. Define $F$ on vertices of $X$ as $G$ is defined on objects of $LX$. Define $F$ on arrows of $X$ via the function $A_X\to\Path_X\to\Mor(LX)\To{G}\Mor(LY)$, and choose a representative for its equivalence class from $\Path_{LY}$ (note that any two choices result in the same translation: see Definition \ref{def:translation}). Two equivalent paths in $X$ compose to the same element of $\Mor(LX)$, so $F$ preserves path equivalence. This defines $F$, completing the proof that $L$ an equivalence of categories.

By similar reasoning one proves that $R$ is fully faithful, and concludes that it is inverse to $L$.

\end{proof}

\subsection{The category of instances on a schema}

Given Theorem \ref{thm:equivalence of categories}, the compound notion of categorical schemas and translations is equivalent to that of categories and functors. In the remainder of the paper, we elide the difference between $\Sch$ and $\Cat$, using nomenclature from each interchangeably. 

One sees easily (c.f. Definition \ref{def:instance}) that an instance $I$ on a schema $\mcC$ is the same thing as a functor $\mcC\to\Set$, where $\Set$ is the category of sets. Thus we have a ready-made concept of morphisms between instances: natural transformations of functors. This is an established notion in database literature, often called a {\em homomorphism} of instances (see e.g. \cite{DNR}).

\begin{definition}\label{def:category of instances}

Let $\mcC$ be a schema and let $I,J\taking\mcC\to\Set$ be instances on $\mcC$. A {\em morphism $m$ from $I$ to $J$}, denoted $m\taking I\to J$, is simply a natural transformation between these functors. We define {\em the category of instances on $\mcC$}, denoted $\mcC\inst$, to be the category of instances and morphisms, as above.

More generally, let $\SS$ denote any category; we define {\em the category of $\SS$-valued instances on $\mcC$}, denoted $\mcC\inst_\SS:=\SS^\mcC$, to be the category whose objects are functors $\mcC\to\SS$ and whose morphisms are natural transformations. We refer to $\SS$ as the {\em value category} in this setup.

\end{definition}

\begin{remark}

It appears that programming language theorists do not include homomorphisms between instances in their conception of database instances as elements of a type, preferring instead to work with just the {\em set} $\Ob(\mcC\inst)$ of instances on a schema $\mcC$. Doing so makes it easier to define aggregate functions, such as sums and counts; see e.g. \cite{LT}. 

\end{remark}

In the rest of the paper, we will generally work with $\mcC\inst$, the category of $\Set$-valued instances. However, most of the results go through more generally for $\mcC\inst_\SS$, provided that $\SS$ is complete and cocomplete (i.e. has all small limits and all small colimits). Obviously, given a functor $\SS\to\SS'$ there is an induced functor $\mcC\inst_\SS\to\mcC\inst_{\SS'}$, so the choice of value-category can be changed without much cost.

\begin{example}

Given a schema $\mcC$, there are many categories $\SS$, other than $\SS=\Set$, for which one might be interested in $\mcC\inst_\SS$. For example, given a lambda calculus, the associated category $\SS=\Type$ of types and terms is a good choice (\cite[Section 6.5]{Awo}). One can also use the category $\Fin$ of finite sets, $\cpo$ of complete partial orders, $\Cat$ of small categories, or $\Top$ of topological spaces. 

The choice of value-category is based on how one chooses to view the collection of rows in each table. We usually consider this collection to be a set, but for example one can imagine instead a {\em topological space} of rows, and in this case each column would consist of a continuous map from one space to another.

\end{example}

Toposes, invented by Grothendieck and Verdier \cite{GV} and extended by Lawvere \cite{Law}, are categories that mimic the category of sets in several important ways (see \cite{MM}). In Proposition \ref{prop:functor categories}, we show that the instance categories are often toposes.

\begin{proposition}\label{prop:functor categories}

If $\SS=\Set$ then for any schema $\mcC\in\Sch$, the category $\mcC\inst=\mcC\inst_\Set$ is a topos. If $\mcC$ is a finite category then for any topos $\SS$ (e.g. $\SS=\Fin$, the category of finite sets), the category $\mcC\inst_\SS$ is a topos. If $\SS$ is complete (resp. cocomplete) then $\mcC\inst_\SS$ is also complete (resp. cocomplete).

\end{proposition}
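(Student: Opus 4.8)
The plan is to use Theorem~\ref{thm:equivalence of categories} to regard $\mcC$ as a small category, so that $\mcC\inst_\SS = \SS^\mcC$ is an ordinary functor category, and then to dispatch the three assertions separately, beginning with the one that underlies the others. For the (co)completeness claim I would show that limits and colimits in $\SS^\mcC$ are computed pointwise. Given a small diagram $D\taking\mcJ\to\SS^\mcC$, I define a candidate functor by $(\lim D)(c) := \lim_{j\in\mcJ} D(j)(c)$, the limit formed in $\SS$ (which exists since $\SS$ is complete), with the action on a morphism $c\to c'$ of $\mcC$ induced by the universal property; a routine verification shows this functor is the limit in $\SS^\mcC$. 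The cocomplete case is dual. This argument specializes to finite $\mcJ$ whenever $\SS$ is only finitely complete, a fact I will reuse below; it presents no real difficulty.

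Next, the claim that $\Set^\mcC$ is a topos. Here I would invoke the standard structure of a presheaf topos and merely exhibit the three ingredients rather than reprove them from scratch. Finite limits exist by the pointwise computation just described (applied to $\SS = \Set$). The exponential of $G, H \in \Set^\mcC$ is given by the Yoneda formula $H^G(c) = \Hom_{\Set^\mcC}(\mcC(c,-)\times G,\, H)$, which one checks is right adjoint to $-\times G$. The subobject classifier $\Omega$ assigns to each object $c$ the set of subfunctors of the representable functor $\mcC(c,-)$, with the map $1\to\Omega$ selecting the maximal subfunctor at each object. Verifying that these data satisfy the topos axioms is routine presheaf theory.

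The substantive assertion is that $\SS^\mcC$ is a topos when $\mcC$ is finite and $\SS$ is an arbitrary (elementary) topos. My approach is internalization. Because $\SS$ is a topos it has finite limits and finite colimits, and because $\mcC$ has finitely many objects and morphisms I can form the finite coproducts $\coprod_{\Ob(\mcC)}1$ and $\coprod_{\Mor(\mcC)}1$ and assemble them, using the universal properties together with the source, target, identity, and composition data of $\mcC$, into an internal category $\underline{\mcC}$ in $\SS$. Using the extensivity of $\SS$ (so that $\SS/\coprod_{\Ob(\mcC)}1 \simeq \prod_{\Ob(\mcC)}\SS$), I would then identify the external functor category $\SS^\mcC$ with the category of internal diagrams on $\underline{\mcC}$, and finally appeal to the classical theorem that the category of internal diagrams on an internal category in a topos is again a topos. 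Finiteness of $\mcC$ is essential precisely here: an elementary topos such as $\Fin$ guarantees only finite colimits, so the coproducts needed to internalize $\mcC$ exist only when $\Ob(\mcC)$ and $\Mor(\mcC)$ are finite.

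The main obstacle is this last step. Unlike the $\Set$-valued case, one cannot simply write $\Omega$ down elementwise as a family of sieves; the subobject classifier and the exponentials of $\SS^\mcC$ must instead be produced through the internal-category formalism. Setting up the internalization of $\mcC$ correctly, and verifying the equivalence between external functors $\mcC\to\SS$ and internal diagrams on $\underline{\mcC}$, is the delicate part of the argument, even though each ingredient is standard.
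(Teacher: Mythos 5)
Your proposal is correct, but it supplies far more than the paper does: the paper's entire proof is a pointer to the literature, citing \cite{JoP} A.2.1.3 for both topos claims and \cite{Bor1} Theorem 2.15.2 and \cite{Awo} Proposition 8.8 for completeness and cocompleteness. What you have written out is essentially the content of those citations. Your pointwise construction of (co)limits in $\SS^\mcC$ is the standard argument in Borceux and Awodey; your description of $\Set^\mcC$ as a presheaf topos on $\mcC\op$ is the classical one, and you have the variances right (the exponential $H^G(c)=\Hom_{\Set^\mcC}(\mcC(c,-)\times G,H)$ is covariant in $c$, and $\Omega(c)$ is the set of cosieves on $c$, i.e.\ subfunctors of the covariant representable). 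For the finite-$\mcC$ case your internalization strategy --- building $\ul{\mcC}$ with $C_0=\coprod_{\Ob(\mcC)}1$ and $C_1=\coprod_{\Mor(\mcC)}1$, using disjointness and universality of finite coproducts in a topos to identify external functors $\mcC\to\SS$ with internal diagrams on $\ul{\mcC}$, and then invoking the theorem that internal diagram categories over an internal category in an elementary topos are toposes --- is exactly the mechanism behind the Johnstone reference, and you correctly isolate where finiteness of $\mcC$ is used (an elementary topos has only finite colimits). The one step you lean on without proof, that the category of internal diagrams on an internal category in a topos is again a topos, is itself a substantial classical theorem; since the paper defers the entire proposition to the literature, deferring that single ingredient is entirely in keeping with the level of detail intended here.
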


\begin{proof}

The first pair of claims are \cite[A.2.1.3]{JoP}. The second pair of claims are found in \cite[Theorem 2.15.2]{Bor1} and \cite[Proposition 8.8]{Awo}, respectively.

\end{proof}

Given a database instance $I$, updates on $I$ include deletion of rows, insertion of rows, splitting (one row becoming two), and merging (two rows becoming one). In fact, we classify insertions and merges together as {\em progressive updates} and we classify deletions and splits together as {\em regressive updates}. Then every update can be considered as a regressive update followed by a progressive update.

\begin{definition}

Let $\mcC$ be a schema and $I\in\mcC\inst$ an instance. A {\em progressive update on $I$} consists of an instance $J$ and a natural transformation $p\taking I\to J$. A {\em regressive update on $I$} consists of an instance $J$ and a natural transformation $r\taking J\to I$. That is, a regressive update is just a progressive update in reverse. An {\em update} is a finite sequence of progressive and regressive updates.

\end{definition}

\begin{proposition}

Let $\mcC$ be a schema and $I$ an instance on $\mcC$. Any update on $I$ can be obtained as a single regressive update followed by a single progressive update.

\end{proposition}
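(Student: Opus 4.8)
The plan is to read an update as a finite zig-zag of natural transformations and to collapse it to a single span by repeatedly forming pullbacks. Concretely, an update on $I$ is a finite list of instances $I=A_0,A_1,\ldots,A_n=L$ equipped, for each $i$, with a morphism that is either $A_{i-1}\to A_i$ (a progressive step) or $A_i\to A_{i-1}$ (a regressive step). A single regressive update followed by a single progressive update is exactly a span $I\from K\to L$ (first $K\to I$, then $K\to L$), so the goal is to show that every such zig-zag is equivalent to a span with the same two endpoints $I$ and $L$. The only external fact I need is that $\mcC\inst=\Set^{\mcC}$ has all pullbacks; this is immediate from Proposition \ref{prop:functor categories}, since a topos has all finite limits (and in any case pullbacks in $\Set^{\mcC}$ may be computed pointwise in $\Set$).

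First I would normalize the zig-zag. Since $\mcC\inst$ is a category, I compose any two consecutive steps pointing in the same direction, so that after finitely many compositions the zig-zag strictly alternates between progressive and regressive steps. Call an interior instance $A_i$ a \emph{peak} if both incident morphisms point into it (a progressive $A_{i-1}\to A_i$ immediately followed by a regressive $A_{i+1}\to A_i$), and a \emph{valley} if both point out of it. The target form $I\from K\to L$ is precisely an alternating zig-zag with no peaks whose unique valley is $K$, so it suffices to drive the number of peaks to zero. The key move removes one peak: given a peak $A_{i-1}\to A_i\from A_{i+1}$, form the pullback $P:=A_{i-1}\times_{A_i}A_{i+1}$, which exists by the remark above, replacing the peak by a valley $A_{i-1}\from P\to A_{i+1}$. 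Because the diagram alternates, the step just left of $A_{i-1}$ is regressive and the step just right of $A_{i+1}$ is progressive, so I may compose $P\to A_{i-1}$ with the step out of $A_{i-1}$ and $P\to A_{i+1}$ with the step out of $A_{i+1}$, producing a shorter alternating zig-zag.

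The crucial point is that this substitution lowers the peak count and creates no new peak: the only vertex whose incident directions change is the new valley $P$, while the directions at its neighbors are unaffected. Hence the number of peaks strictly decreases, and iterating terminates in an alternating zig-zag from $I$ to $L$ with no peaks at all. Because it alternates and avoids the pattern ``progressive-then-regressive,'' the only surviving shapes are the empty zig-zag, a lone progressive step $I\to L$, a lone regressive step $K\to I$, or a single valley $I\from K\to L$. Each of these is a span $I\from K\to L$ once one pads with identities, e.g. $I\From{\id} I\to L$ in the purely progressive case, which is exactly a regressive update followed by a progressive update.

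I expect the main obstacle to be bookkeeping rather than conceptual. One must check that each pullback substitution genuinely decreases the peak count without introducing new peaks, and that recomposition keeps the two endpoints $I$ and $L$ fixed throughout; here it helps that an endpoint carries only one incident morphism and so can never be a peak nor serve as the apex $A_i$ of a pullback. The single substantive categorical input, the existence of pullbacks, is furnished by Proposition \ref{prop:functor categories}, and the rest is the termination argument above.
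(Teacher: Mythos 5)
Your argument is correct and is essentially the paper's proof: both first compose consecutive same-direction steps to normalize the zig-zag, and then collapse it to a single span $I\from K\to L$ using pullbacks in $\mcC\inst$. The only cosmetic difference is that you remove peaks one at a time with an explicit termination argument, whereas the paper takes the limit of the whole normalized diagram at once while noting that this limit can be computed by exactly the iterated fiber products you describe.
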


\begin{proof}

The composition of two progressive (resp. regressive) updates is a progressive (resp. regressive) update. Hence any update on $I_0:=I$ can be written as a diagram $D$ in $\mcC\inst$:
$$\xymatrix@=13pt{&I_{01}\ar[ddl]_{r_1}\ar[ddr]^{p_1}&&I_{12}\ar[ldd]_{r_2}\ar[rdd]^{p_2}&&I_{23}\ar[ddl]_{r_2}&\cdots&I_{n_1,n}\ar[ddr]^{p_n}
\\\\I_0&&I_1&&I_2&&\cdots&&I_n}$$ But the limit of this diagram (which can be taken, if one wishes, by taking fiber products such as $I_{j-1,j}\to I_j\from I_{j,j+1}$, and repeating $\frac{n^2-n}{2}$ times), is what we need: $$I_0\from\lim D\to I_n.$$

\end{proof}

\begin{remark}

Another way to understand deletes is via filtering---one filters out all rows of a certain form. Filtering will be discussed in Section \ref{sec:filtering}.

\end{remark}

\subsection{Grothendieck construction}\label{sec:Grothendieck}

In Section \ref{sec:RDF} we showed how to convert an instance $I\taking\mcC\to\Set$ to a new category $Gr(I)$, called the {\em Grothendieck construction} or {\em category of elements} of $I$. This construction models the conversion from relational to RDF forms of data. There is a reverse construction that is described in \cite[Proposition 2.3.9]{Sp4}.

We note here that there is a more general Grothendieck construction that may be useful in the context of federated databases. In programming languages theory, one may hear of a category of {\em kinds}, each object of which is itself a category of types. Here, each kind is analogous to a schema, and each type in that kind is analogous to a table in that schema. Given a category of kinds,  we can ``throw all their types together" by applying the generalized Grothendieck construction. This is akin to taking a federated database (i.e. a schema of related schemas) and merging them all into a single grand schema. One can apply this construction at the data level as well, merging all the instances into an instance over the single grand schema.

The version of the Grothendieck construction given in Section \ref{sec:RDF} is for functors $I\taking\mcC\to\Set$. Each object $c\in\Ob(\mcC)$ corresponds to a table whose set of rows is $I(c)$. One can find a description of this construction in \cite[Section 1.5]{MM}. The version of the Grothendieck construction in which federated schemas are combined into one big schema is for functors $D\taking\mcC\to\Cat$. Each object $c\in\Ob(\mcC)$ corresponds to a database whose category of tables is $D(c)$. One can find a description of this construction in \cite[B.1.3.1]{JoP}.

\subsection{Dictionary}\label{sec:dictionary}

Our hope is that this paper will serve as a dictionary, whereby results from category theory literature can be imported directly into database theory. In Section \ref{sec:data migration} we will see such a result: translations between schemas provide data migration functors that have useful and provable properties. In Table \ref{table:dictionary} we gather some of the foundational links between databases and categories, as presented throughout the paper.

\begin{table}\caption{Dictionary between database terminology and category theory terminology.}\label{table:dictionary}\begin{tabular}{| p{2.42in} | p{2.1in} |}\bhline\multicolumn{2}{| c |}{Dictionary between DB and CT terminology}\\\bhline 
{\bf Database concept}&{\bf Category-theory concept}\\\bbhline 
Database schema, $\mcC$&Category, $\mcC$\\\hline 
Table $T\in\mcC$&Object $T\in\Ob(\mcC)$\\\hline 
Column $f$ of $T$&Outgoing morphism, $f\taking T\to\; ?$\\\hline 
Foreign key column $f$ of $T$ pointing to $U$&Morphism $f\taking T\to U$\\\hline 
Sequence of foreign keys&Composition of morphisms\\\hline 
Primary key column ID of $T$&Identity arrow, $\id_T\taking T\to T$\\\hline 
Controlled vocabulary (i.e. one-column table $D$) &Object $D$ without outgoing morphisms (except $\id_D\taking D\to D$)\\\hline 
Foreign key path equivalence in $\mcC$&Commutative diagram in $\mcC$\\\bhline 
Instance $I$ on schema $\mcC$&Functor $I\taking \mcC\to\Set$ (or  $I\taking\mcC\to\SS$ for some other ``nice" category $\SS$)\\\hline
Conversion of Relational to RDF&Grothendieck construction\\\hline 
Insertion update $u\taking I_0\to I_1$&Natural transformation $u\taking I_0\to I_1$\\\hline 
Deletion update $u\taking I_0\to I_1$&Natural transformation $u\taking I_1\to I_0$\\\bhline 
Schema mapping $\mcC\to\mcD$&Functor $F\taking\mcD\to\mcC$\\\hline 
Basic ETL process $\mcC\inst\to\mcD\inst$&Pullback functor, often denoted $\Delta_F$ or $F^*\taking\mcC\set\to\mcD\set$\\\bhline\end{tabular}\end{table}

\section{Data migration functors}\label{sec:data migration}

Given schemas $\mcC$ and $\mcD$, a data migration functor is tasked with transforming any $\mcC$-instance $I$ into some $\mcD$-instance $J$ (or vice versa). Moreover, it must do so {\em in a natural way}, meaning that progressive (resp. regressive) updates on $I$ must result in progressive (resp. regressive) updates on $J$. Data migration functors were concretely exemplified in Section \ref{sec:simplified data migration}.

In this section we will start with a translation between schemas $F\taking\mcC\to\mcD$. Recall from Definition \ref{def:translation} that this is simply a mapping from vertices in $\mcC$ to vertices in $\mcD$ and arrows in $\mcC$ to paths in $\mcD$, respecting path equivalence. Any translation $F$ generates three data migration functors. These will be denoted $$\Sigma_F\taking\mcC\inst\to\mcD\inst\hsp\Delta_F\taking\mcD\inst\to\mcC\inst\hsp\Pi_F\taking\mcC\inst\to\mcD\inst.$$ One may notice that $\Delta_F$ seems to go backwards---the direction opposite to that of $F$. Although this may seem counter-intuitive, in fact $\Delta_F$ is the simplest of the three data migration functors and the most straightforward to describe. 

Before we do so, let us quickly discuss value categories. Recall from Definition \ref{def:category of instances} that for any category $\SS$, we have a category $\mcC\inst_\SS$ of $\mcC$-instances valued in $\SS$, i.e. the category of functors $\mcC\to\SS$. The migration functor $\Delta_F\taking\mcD\inst_\SS\to\mcC\inst_\SS$ exists regardless of ones choice of $\SS$. For $\Sigma_F$ to exist, $\SS$ must be cocomplete, and for $\Pi_\SS$ to exist, $\SS$ must be complete. To fix ideas, most readers should simply take $\SS=\Set$, unless they are compelled to do otherwise. This is the case where the rows of each table form a set.  

\subsection{The pullback data migration functor $\Delta$}\label{sec:pullback}

Suppose we have a translation $F\taking\mcC\to\mcD$. Given a $\mcD$-instance, $I\in\mcD\inst_\SS$, we need to transform it to a $\mcC$-instance in a natural way. But this is simple, because $I\taking\mcD\to\SS$ is a functor and the composition of functors is a functor, so the composite $$\mcC\To{F}\mcD\To{I}\SS,$$ is an object of $\mcC\inst_\SS$, as desired. Similarly, a natural transformation $m\taking I\to J$ is whiskered with $F$ to yield a natural transformation $(m\circ F)\taking (I\circ F)\too (J\circ F)$. Thus we have defined a functor $$\Delta_F\taking\mcD\inst_\SS\to\mcC\inst_\SS,\hsp\Delta_F(-):=(-\circ F)$$ The slogan is ``$\Delta_F$ is given by composition with $F$." 

We have now defined the pullback functor $\Delta_F\taking\mcD\inst_\SS\to\mcC\inst_\SS$. It was explicitly discussed in Example \ref{ex:pullback}. Roughly, it can accommodate: renaming tables, renaming columns, deleting tables, projecting out columns, duplicating tables, and duplicating columns.

\subsection{The right pushforward data migration functor $\Pi$}\label{sec:right pushforward}

Suppose we have a translation $F\taking\mcC\to\mcD$. Given a $\mcC$-instance, $I\in\mcC\inst_\SS$, we need to transform it to a $\mcD$-instance in a natural way. We will do so by using the right adjoint of $\Delta_F$; however, to do this we will need to assume that $\SS$ is complete (i.e. that $\SS$ has small limits). Note that $\SS=\Set$ is complete.

\begin{proposition}

Let $F\taking\mcC\to\mcD$ be a functor, and let $\SS$ be a complete category. Then the functor $\Delta_F\taking\mcD\inst_\SS\to\mcC\inst_\SS$ has a right adjoint, which we denote by $$\Pi_F\taking\mcC\inst_\SS\to\mcD\inst_\SS.$$

\end{proposition}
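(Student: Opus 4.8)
The plan is to recognize $\Delta_F$ as the restriction (precomposition) functor $F^*=(-\circ F)\taking\SS^\mcD\to\SS^\mcC$ and to exhibit its right adjoint as the \emph{pointwise right Kan extension} of instances along $F$. Because $\Sch\we\Cat$ (Theorem \ref{thm:equivalence of categories}), the categories $\mcC$ and $\mcD$ are small; combined with the completeness of $\SS$, this guarantees that every limit appearing below is a small limit in $\SS$ and hence exists. This is the only place the completeness hypothesis on $\SS$ is consumed.

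First I would write down the value of the candidate functor on objects. For an instance $I\taking\mcC\to\SS$ and an object $d\in\Ob(\mcD)$, form the comma category $(d\down F)$ whose objects are pairs $(c,\phi)$ with $c\in\Ob(\mcC)$ and $\phi\taking d\to F(c)$ a morphism of $\mcD$, and whose morphisms $(c,\phi)\to(c',\phi')$ are those $h\taking c\to c'$ in $\mcC$ with $F(h)\circ\phi=\phi'$. Precomposing $I$ with the forgetful projection $(d\down F)\to\mcC$, $(c,\phi)\mapsto c$, produces a diagram in $\SS$, and I set
$$\Pi_F(I)(d):=\lim_{(c,\phi)\in(d\down F)}I(c).$$
A morphism $\delta\taking d\to d'$ in $\mcD$ induces a functor $(d'\down F)\to(d\down F)$, $(c,\phi')\mapsto(c,\phi'\circ\delta)$, and restriction of cones along it yields a map $\Pi_F(I)(d)\to\Pi_F(I)(d')$; this makes $\Pi_F(I)$ a functor $\mcD\to\SS$, i.e. a $\mcD$-instance. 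A morphism $m\taking I\to I'$ in $\mcC\inst_\SS$ induces maps on each of the diagrams above, hence on limits, defining $\Pi_F(m)$; functoriality of $\Pi_F$ is then routine.

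The heart of the argument is the adjunction $\Delta_F\dashv\Pi_F$, which I would prove by exhibiting a bijection
$$\Hom_{\mcC\inst_\SS}(\Delta_F J,\,I)\we\Hom_{\mcD\inst_\SS}(J,\,\Pi_F I)$$
natural in $J\in\mcD\inst_\SS$ and $I\in\mcC\inst_\SS$. Unwinding the right-hand side via the universal property of the limit, a map $J\to\Pi_F I$ is the same as a family of morphisms $J(d)\to I(c)$, one for each $(c,\phi)\in(d\down F)$, compatible with the comma-category morphisms and natural in $d$. Given $\alpha\taking\Delta_F J\to I$ with components $\alpha_c\taking J(Fc)\to I(c)$, I would send it to the family whose $(c,\phi)$-component is the composite $J(d)\To{J(\phi)}J(Fc)\To{\alpha_c}I(c)$; conversely, a compatible family $\beta$ is sent back to the transformation whose $c$-component is the $(c,\id_{Fc})$-component of $\beta_{Fc}$.

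I expect the main obstacle to be the bookkeeping in this last step: verifying that the first assignment really lands in cones (the compatibility condition over the morphisms of $(d\down F)$), that the two assignments are mutually inverse, and that the resulting bijection is natural in both variables. None of this is deep---it is the standard Kan-extension calculus---but it is where all the care is needed, and it is exactly where the functoriality of $F$ and the naturality of $\alpha$ enter. As a shortcut, one could instead invoke the existence theorem for pointwise right Kan extensions along a functor out of a small category into a complete category (e.g. \cite[Chapter X]{Mac}), of which the displayed formula is the canonical instance; the explicit formula is worth recording anyway, since for $\SS=\Set$ its limits compute precisely the joins seen concretely in Example \ref{ex:right pushforward}.
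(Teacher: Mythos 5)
Your argument is correct and is essentially the paper's: the paper's entire proof of this proposition is the citation of \cite[Corollary X.3.2]{Mac}, and what you have written out---the pointwise right Kan extension $\Pi_F(I)(d)=\lim_{(c,\phi)\in(d\downarrow F)}I(c)$ over the comma category, together with the cone/whiskering bijection establishing $\Delta_F\dashv\Pi_F$, with smallness of the schemas and completeness of $\SS$ guaranteeing the limits exist---is precisely the content of that cited result unpacked. The explicit limit formula you record is also the one the paper itself gestures at immediately afterward (the ``pointwise formula'' of \cite[Theorem X.3.1]{Mac}) to explain why $\Pi_F$ computes joins.
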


\begin{proof}

This is \cite[Corollary X.3.2]{Mac}.

\end{proof}

We have now defined the right pushforward functor $\Pi_F\taking\mcC\inst_\SS\to\mcD\inst_\SS$. It was explicitly discussed in Example \ref{ex:right pushforward}. Roughly, it can accommodate: renaming tables, renaming columns, and joining tables. To see this, one applies the ``pointwise formula" for right Kan extensions, e.g. as given in \cite[Theorem X.3.1]{Mac}; a more explicit formulation is given in \cite{Sp3}.

\subsection{The left-pushforward data migration functor $\Sigma$}\label{sec:left pushforward}

Suppose we have a translation $F\taking\mcC\to\mcD$. Given a $\mcC$-instance, $I\in\mcC\inst_\SS$, we need to transform it to a $\mcD$-instance in a natural way. We will do so by using the left adjoint of $\Delta_F$; however, to do this we will need to assume that $\SS$ is cocomplete  (i.e.that $\SS$ has small colimits). Note that $\SS=\Set$ is cocomplete.

\begin{proposition}

Let $F\taking\mcC\to\mcD$ be a functor, and let $\SS$ be a cocomplete category. Then the functor $\Delta_F\taking\mcD\inst_\SS\to\mcC\inst_\SS$ has a left adjoint, which we denote by $$\Sigma_F\taking\mcC\inst_\SS\to\mcD\inst_\SS.$$

\end{proposition}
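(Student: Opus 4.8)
The plan is to observe that this statement is the formal dual of the preceding proposition. As noted in Section \ref{sec:pullback}, the functor $\Delta_F$ is nothing but restriction (precomposition) along $F$, i.e. $\Delta_F=(-\circ F)\taking\SS^\mcD\to\SS^\mcC$. A left adjoint to a restriction functor is exactly a \emph{left Kan extension} along $F$, just as the right adjoint $\Pi_F$ produced in the previous proposition is the right Kan extension $\tn{Ran}_F$. Thus I would set $\Sigma_F:=\tn{Lan}_F$ and invoke the dual of the already-cited \cite[Corollary X.3.2]{Mac}: right Kan extensions exist when the value category is complete, and dually left Kan extensions exist when it is cocomplete.

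To make the construction concrete (and to recover the union-with-Skolemization behaviour of Example \ref{ex:left pushforward}), I would record the pointwise colimit formula of \cite[Theorem X.3.1]{Mac}. For an instance $I\taking\mcC\to\SS$ and an object $d\in\mcD$, set
$$\Sigma_F(I)(d):=\colim_{(c,\;\phi\taking F(c)\to d)}I(c),$$
where the colimit ranges over the comma category $(F\downarrow d)$ whose objects are pairs consisting of an object $c\in\mcC$ together with a morphism $\phi\taking F(c)\to d$ in $\mcD$. Cocompleteness of $\SS$ is exactly what guarantees that each such colimit exists. Functoriality in $d$ then comes for free: a morphism $d\to d'$ in $\mcD$ induces a functor $(F\downarrow d)\to(F\downarrow d')$ by postcomposition, hence a canonical comparison map between the corresponding colimits; and a natural transformation $m\taking I\to J$ induces $\Sigma_F(m)$ by applying $m$ termwise inside the colimit.

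It remains to exhibit the adjunction $\Sigma_F\dashv\Delta_F$, i.e. a bijection
$$\Hom_{\mcD\inst_\SS}(\Sigma_F I,\,J)\;\iso\;\Hom_{\mcC\inst_\SS}(I,\,\Delta_F J)$$
natural in $I$ and $J$. The unit $I\to\Delta_F\Sigma_F I$ is assembled from the coprojections $I(c)\to\Sigma_F(I)(F(c))$ associated to the object $(c,\id_{F(c)})\in(F\downarrow F(c))$, and the bijection sends a map $\Sigma_F I\to J$ to its restriction precomposed with this unit. The main thing to check---and the only step I expect to require genuine work---is the universal property: that a natural transformation out of each colimit $\Sigma_F(I)(d)$ is the same as a cocone under the diagram, which unwinds precisely to the data of a map $I\to\Delta_F J$ compatible with every structure map $\phi$. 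This verification, together with the naturality of the resulting bijection, is entirely standard Kan-extension bookkeeping and is strictly dual to the computation underlying the cited proof for $\Pi_F$; no new ideas beyond the cocompleteness hypothesis are needed.
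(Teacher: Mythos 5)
Your proposal is correct and matches the paper's approach: the paper simply cites the standard existence of left Kan extensions along a functor into a cocomplete category (\cite[Theorem 3.7.2]{Bor1}, \cite[Proposition 9.11]{BW2}), and your pointwise colimit formula over the comma category $(F\downarrow d)$ together with the adjunction verification is precisely the content of those references. You have merely unpacked what the paper leaves to the literature; no divergence in method.
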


\begin{proof}

This follows from \cite[Theorem 3.7.2]{Bor1} and \cite[Proposition 9.11]{BW2}.

\end{proof}

We have now defined the left pushforward functor $\Sigma_F\taking\mcC\inst_\SS\to\mcD\inst_\SS$. It was explicitly discussed in Example \ref{ex:left pushforward}. Roughly, it can accommodate: renaming tables, renaming columns, taking the union of tables, and creating Skolem variables. To see this, one applies the ``pointwise formula" for left Kan extensions, e.g. as given in \cite[Theorem 3.7.2]{Bor1}; a more explicit formulation is given in \cite{Sp3}.

\section{Data types and filtering}\label{sec:typing}

In this section we will formulate the typing relationship that holds between abstract data and its representation. Until now, we have been considering data as simply a collection of interconnected elements---an instance in the sense of Definition \ref{def:instance} keeps track of various sets of abstract elements, segregated into tables and connected together in precise ways. However, in reality, each such element is represented in its table by way of a datatype, such as strings or integers. Semantically, each cell in a given column $c\taking t\to t'$ of a table $t$ should have the same datatype, namely they should all have the datatype of the target table, $t'$.

However, datatypes not only give a uniform method for displaying each data element, but they can also carry a notion of value. For example, salaries are numbers that can be added together to give meaningful invariants. It is for this reason that we must find a connection between database formalism and programming language formalism, as was discussed in Section \ref{sec:data and program}. Category theory provides such a connection: both database schemas and programming languages form categories, and these categories can be related by functors. 

Below we will explain these concepts, in particular how to attach datatypes from a programming language to tables in a categorical schema. To do so, we will make use of the concepts in Section \ref{sec:data migration}. After defining type signatures on schemas we will proceed to define morphisms of type signatures, which will enable us to filter data. For example, to filter all names that start with the letter R, we might pull back along an inclusion $\{\text{R}\}\to\Str$, where $\Str$ is the set of strings.

\subsection{Assigning data types via natural transformation}\label{sec:DT as NT}

We begin with two examples to motivate the definition. 

\begin{example}\label{ex:set of integers}

What is meant mathematically by the phrase ``a set of integers"? Consider that it is a set labeled $X$, together with a function $f\taking X\to\ZZ$. Allowing our set of integers to change, we get different sets labeled $X$ and different functions labeled $f$, but the set $\ZZ$ of integers is unchanged. From the categorical perspective we can understand ``a set $X$ of integers" in a couple different ways:
\begin{enumerate}\item as a database instance $I\taking\mcC\to\Set$ on the schema $$\mcC:=\fbox{\xymatrix@1{\LTO{X}\ar[r]^f&\LMO{\ZZ}}},$$ such that the image on $\LMO{\ZZ}$ is fixed as $I(\LMO{\ZZ})=\ZZ$; or 
\item as a database instance $J\taking\mcD\to\Set$ on the schema $$\mcD:=\fbox{$\LTO{X}$}$$ equipped with a natural transformation $f\taking J\to\{\ZZ\}$ (where $\{\ZZ\}$ is shorthand for the functor $\mcD\to\Set$ given by $D(\LTO{X})=\ZZ$).
\end{enumerate}

\end{example}

\begin{example}\label{ex:times 50}

Suppose we have a database and that we would like to enforce a mathematical relationship between two columns, say $t$ and $d$, in a certain table {\tt X}. For example, it might be that column $t$, say ``time spent" (in an integer number of hours) is related to column $d$, say ``debt owed" (as a dollar-figure) by a mathematical function $d=r(t)$, say $$d(x)=r(t(x)):=\$50*t(x)$$ for each $x\in X$. Just as in Example \ref{ex:set of integers}, this situation could be categorically represented in a couple ways, but we will focus on only one. Namely, we understand it as the collection of database instances $J\taking\mcD\to\Set$ on schema $\mcD$ as exemplified below \begin{align}\label{dia:times 50}\mcD:=\parbox{.75in}{\fbox{\parbox{.75in}{\begin{center}$\ul{d = t   r}$\end{center}\xymatrix{\LTO{X}\ar[r]^t\ar[dr]_-d&\LTO{Y}\ar[d]^-r\\&\LTO{Z}}}}}\hspace{.5in}J({\tt X}):=\begin{tabular}{| l || l | l |}\bhline\multicolumn{3}{|c|}{X}\\\bhline {\bf ID}&{\bf t}&{\bf d}\\\bbhline CtrX13&4&\$200\\\hline CtrX14&7&\$350\\\hline CtrX15&2&\$100\\\bhline\end{tabular}\end{align} The fact that the $d$ has dollar-figure datatype and that $d=\$50*t$ are enforced by a certain natural transformation, $f\taking J\to P$, where $P$ is a {\em typing instance}. We will give more details in Example \ref{ex:times 50 detail}.

\end{example}

\begin{definition}

Let $\mcC$ be a schema and let $P\in\Ob(\mcC\inst)$ be an instance. The category of {\em $P$-typed instances on $\mcC$}, denoted $\mcC\inst_{/P}$, is defined to be the ``slice" category of instances over $P$ (see, e.g. \cite[Categorical Preliminaries]{MM}). In other words, a $P$-typed instance on $\mcC$ is a pair $(I,\tau)$ where $I$ is an instance and $\tau\taking I\to P$ is a natural transformation; and a morphism of $P$-typed instances is a commutative triangle. 

\end{definition}

\begin{remark}

Given a schema $\mcC$ we may refer to any instance $P\in\Ob(\mcC\inst)$ as a {\em typing instance} if our plan is to consider $P$-typed instances, i.e. the category $\mcC\inst_{/P}$.

\end{remark}

\begin{remark}

Fix a schema $\mcC$ and a category $\SS$ and let $\mcE:=\mcC\inst_\SS$ denote the category of $\SS$-valued instances on $\mcC$. In practice $\mcE$ is often a topos (see Proposition \ref{prop:functor categories}). In case it is, then for any instance $P\in\Ob(\mcE)$, the category $\mcE_{/P}$ of $P$-typed $\SS$-valued instances on $\mcC$ is again a topos (see \cite[Theorem IV.7.1]{MM}). 

\end{remark}

\begin{construction}\label{const:typing}

Suppose given a category $\Type$ of types for some programming language and an $\SS$-valued functor $V\taking\Type\to\SS$ which sends each type to its set (or $\SS$-object) of values. We often wish to use a fragment of $\Type$ to add typing information to our database schema $\mcC$. If the fragment is given by the functor $\mcB\To{F}\Type$ and $\mcB$ is associated to the schema via a functor $\mcB\To{G}\mcC$, $$\SS\Froom{V}\Type\Froom{F}\mcB\Too{G}\mcC,$$ then $P:=\Pi_\mcG\circ\Delta_\mcF(V)$ is the implied typing instance. We call the sequence $(\mcB,F,G)$ {\em the typing auxiliary} in this setup.

\end{construction}

\begin{example}\label{ex:times 50 detail}

We return to Example \ref{ex:times 50} with the language from Construction \ref{const:typing}. We will now describe a typing auxiliary. Let $\mcB$ be one-arrow category drawn below, and let $G\taking\mcB\to\mcD$ be the suggested functor $$\mcB:=\parbox{.4in}{\fbox{\xymatrix{\LTO{Y'}\ar[d]^-{r'}\\\LTO{Z'}}}}\Too{G}\parbox{.9in}{\fbox{\parbox{.75in}{\begin{center}$\ul{d = t   r}$\end{center}\xymatrix{\LTO{X}\ar[r]^t\ar[dr]_-d&\LTO{Y}\ar[d]^-r\\&\LTO{Z}}}}}=:\mcD$$ Consider also the functor $F\taking\mcB\to\Type$ sending {\tt Y}$'$ to {\tt Int}, the type of integers, {\tt Z}$'$ to ${\tt Dollar}$ the type of dollar figures, and $r'$ to the function that multiplies an integer by 50. 

With $V\taking\Type\to\SS$ as in Construction \ref{const:typing}, the implied typing instance $P:=\Pi_G\circ\Delta_F(V)\taking\mcD\to\SS$ has $$P({\tt X})={\tt Int}\cross{\tt Dol};\hsp P({\tt Y})={\tt Int};\hsp P({\tt Z})={\tt Dollar},$$ and  $P(r)\taking P({\tt Y})\to P({\tt Z})$ is indeed the multiplication by 50 map.

Now a $P$-typed instance $\tau\taking I\to P$ is exactly what we want. For each of {\tt X, Y, Z} it is a set with a map to the given data type, and the naturality of $\tau$ ensures the properties described in Example \ref{ex:times 50} (i.e. that $d=\$50*t$ in the table $J({\tt X})$ from Display (\ref{dia:times 50})). In other words, it ensures that for any row in $J({\tt X})$, the value of the cell in column $d$ will be 50 times the value of the cell in column $t$.

\end{example}

\subsection{Morphisms of type signatures}\label{sec:morphism of types}

Each data migration functor discussed in Section \ref{sec:data migration} is a kind of tool for schema evolution. As new tables and columns are created and others are discarded, the translation between old schema and new will induce data migration functors that convert seamlessly from old data to new (and vice versa) and from queries against the old schema to queries against the new one (see also \cite{Sp4}). 

There is a slightly different kind of schema evolution that comes up often, namely changing data types. For example, if a company surpasses around 32,000 employees, they may need to change the datatype on their {\tt Employee} table from a smallint to a bigint. More complex changes include cutting the price for every share of stock by half, or concatenating a first and a last name pair to form a new field. Importantly, one needs to be able to reason about how queries against today's schema will differ from those against yesterday's. Change-of-types functors are just like data migration functors, and the formal nature of their description allows one to reason about their behavior.

Let $\mcC$ be a schema and let $\mcE$ be the topos $\mcC\inst$. Given a morphism of typing instances $k\taking P\to Q$, there are induced adjunctions $$\Adjoint{\wh{\Sigma}_k}{\mcE_{/P}}{\mcE_{/Q}}{\wh{\Delta}_k}\hspace{.5in}\Adjoint{\wh{\Delta}_k}{\mcE_{/Q}}{\mcE_{/P}}{\wh{\Pi}_k}$$ In other words, $k$ induces an essential geometric morphism of toposes (\cite[Theorem IV.7.2]{MM}).

\begin{definition}

Let $\mcC$ be a schema and $k\taking P\to Q$ a morphism of typing instances. We refer to the induced functors $$\wh{\Sigma}_k,\wh{\Pi}_k\taking\mcC\inst_{/P}\to\mcC\inst_{/Q}\hsp\wh{\Delta}_k\taking\mcC\inst_{/Q}\to\mcC\inst_{/P}$$ as {\em type-change functors}. To be more specific, $\wh{\Sigma}_k$ will be called the {\em left pushforward type-change functor}, $\wh{\Pi}_k$ will be called the {\em right pushforward type-change functor}, and $\wh{\Delta}_k$ will be called the {\em pullback type-change functor}.

\end{definition}

\begin{example}

We return to Example \ref{ex:times 50} with $\mcD$ and $P$ as defined there. Consider the left pushforward type-change functor $\wh{\Sigma}_k$ in the case that $k\taking P\to Q$ sends a dollar figure $x$ to True if $x\geq \$200$ and False if $x<\$200$. The functor $\wh{\Sigma}_k$ converts the table on the left to the table on the right below:
$$\tau:=
\begin{tabular}{| l || l | l |}\bhline\multicolumn{3}{|c|}{X}\\\bhline {\bf ID}&{\bf t}&{\bf d}\\\bbhline CtrX13&4&\$200\\\hline CtrX14&7&\$350\\\hline CtrX15&2&\$100\\\bhline
\end{tabular}
\hspace{.5in}
\wh{\Sigma}_k(\tau)=\begin{tabular}{| l || l | l |}\bhline\multicolumn{3}{|c|}{X}\\\bhline {\bf ID}&{\bf t}&{\bf d}\\\bbhline CtrX13&4&True\\\hline CtrX14&7&True\\\hline CtrX15&2&False\\\bhline
\end{tabular}$$

The other type-change functors, $\wh{\Pi}_k$ and $\wh{\Delta}_k$ not have useful results in the context of this particular example, but see Examples \ref{ex:satisfaction} and \ref{ex:filtering}.

\end{example}

The right type-change functor $\wh{\Pi}$ handles what might be called ``group satisfaction." Suppose we have a bunch ($P$) of people and a set $(I)$ of items are distributed among them ($\tau\taking I\to P$). If the people are then subdivided into groups ($k\taking P\to Q$), then we can ask each group $q\in Q$, ``how many ways are there for each of your members to offer up one of their items?" (cardinality of $\wh{\Pi}_k(\tau)^\m1(q)\ss\wh{\Pi}_k(I)$). For example, if one of the people was handed an empty set of items then his or her group will have no such joint offering ($\wh{\Pi}_k(\tau)^\m1(q)=\emptyset$). We now explain this by example.

\begin{example}\label{ex:satisfaction}

Let $\SS$ be a complete category and write $\mcC\inst$ instead of $\mcC\inst_\SS$. In this example we explain  how, given a morphism of typing instances $k\taking P\to Q$ on a schema $\mcC$, the type-change functor $\wh{\Pi}_k\taking\mcC\inst_{/P}\to\mcC\inst_{/Q}$ operates on a $P$-typed instance to return a $Q$-typed instance by what we called {\em group satisfaction} above. Suppose we have $\mcC$ and $\mcB$ as drawn, $$\mcB=\fbox{$\LTO{L'}$},\hsp\mcC=\fbox{$\LTO{L}\Too{f}\LTO{M}$}$$ with the functor $G\taking\mcB\to\mcC$ given by ${\tt L'}\mapsto {\tt L}$ and the functor $P'\taking\mcB\to\SS$ given by $P'({\tt L'})=\{1,2,3,4\}$. Let $Q'\taking\mcB\to\SS$ be given by $Q'({\tt L'})=\{x,y\}$ and let $k'\taking P'\to Q'$ be the map sending $1,2\mapsto x; \;\;3,4\mapsto y.$ Finally, let $$P:=\Pi_G(P'),\;\;\; Q=\Pi_G(Q'), \text{ and }\; k:=\Pi_G(k')\taking P\to Q.$$ We are ready to compute the right type-change functor along $k$ on any $P$-typed instance $I\to P$; we just need to write down some such instance. So, if $I\in\mcC\inst$ is the table on the left then $\wh{\Pi}_k(I)$ is the table in the middle: $$I:=\footnotesize
\begin{tabular}{| l || l |}\bhline\multicolumn{2}{| c |}{L}\\\bhline {\bf ID}&{\bf f}\\\bbhline a&1\\\hline b&2\\\hline c&1\\\hline d&3\\\hline e&2\\\hline f&4\\\hline g&2\\\bhline
\end{tabular}\hspace{.6in}
\wh{\Pi}_k(I)=\begin{tabular}{| l || l |}\bhline\multicolumn{2}{| c |}{L}\\\bhline {\bf ID}&{\bf f}\\\bbhline(a,b)&x\\\hline (a,e)&x\\\hline (a,g)&x\\\hline (c,b)&x\\\hline (c,e)&x\\\hline (c,g)&x\\\hline (d,f)&y\\\bhline
\end{tabular}
\normalsize
\hspace{-1.6in}
\parbox{1in}{$$
\begin{tabular}{cccc}\\&b&&\\a&e&&\\c&g&d&f\\\hline 1&2&3&4\\
\end{tabular}
$$\vspace{-.25in}
$$
\underbrace{\hspace{.3in}}\;\;\underbrace{\hspace{.3in}}
$$
$$x\hsp y$$}
$$
and a sketch of the reasoning is given on the right.

\end{example}

\subsection{Filtering data}\label{sec:filtering}

In this section we show how to use pullback type-change functor $\wh{\Delta}$ to filter data (e.g. answer queries like ``return the set of employees whose salary is less than \$100." 

In fact, given a morphism of instances $k\taking P\to Q$, the associated pullback type-change functor $\wh{\Delta}_k$ can either filter or multiply data (or both) depending on the injectivity or surjectivity of $k$. In this short section we concentrate only on filtering, because it appears to be more useful in practice. We work entirely by example; the definition of $\wh{Delta}$ is given in Section \ref{sec:morphism of types} or in the literature (\cite[Section IV.7]{MM}).

\begin{example}\label{ex:filtering}

As advertised above, we show how to filter employees by their salaries, in particular showing only those with salaries less than \$100. 

Suppose we have a schema $\mcC$ and two typing auxiliaries, $(\mcB,P,G)$ and $(\mcB,P',G)$, shown below: $$\Type\hspace{-.2cm}\xymatrix{~&\hspace{.2in}&~\ar@/^1pc/[ll]^{Q'}\ar@/_1pc/[ll]_{P'}\ar@{}[ll]|{\Down k'}}\hspace{-.2cm}\parbox{.4in}{\vspace{-.15in}\begin{center}$\mcB:=$\end{center}\fbox{$\LTO{Salary}$}}\Too{G}\parbox{1.2in}{\begin{center}$\mcC:=$\end{center}\fbox{\xymatrix@=10pt{&&\LTO{Name}\\\LTO{Employee}\ar[rru]\ar[rrd]\\&&\LTO{Salary}}}}$$ Here we want $Q'({\tt Salary})$ to be the dollar-figure data type, we want $P'({\tt Salary})$ to be the subtype given by requiring that a dollar figure $x$ be less than \$100, and we want $k'\taking P'\to Q'$ to be the inclusion. These typing auxiliaries induce a morphism of typing instances $$k:=\Pi_G(k')\taking P\to Q,\;\text{ where }\; P:=\Pi_G(P')\;\text{ and }\;Q:=\Pi_G(Q').$$

Now suppose that $I\in\mcC\inst$ is the $Q$-typed $\mcC$-instance shown to the left below. Then $\wh{\Pi}_k(I)$ is the $P$-typed instance to the right below. 
$$\small I:=\begin{tabular}{| l || l | l |}\bhline
\multicolumn{3}{| c |}{\tt Employee}\\\bhline
{\bf ID}&{\bf Name}&{\bf Salary}\\\bbhline
Em101&Smith&\$65\\\hline Em102&Juarez&\$120\\\hline Em103&Jones&\$105\\\hline Em104&Lee&\$90\\\hline Em105&Carlsson&\$80
\\\bhline
\end{tabular}
\hspace{.4in}
\wh{\Pi}_k(I)=\begin{tabular}{| l || l | l |}\bhline
\multicolumn{3}{| c |}{\tt Employee}\\\bbhline
{\bf ID}&{\bf Name}&{\bf Salary}\\\bbhline
Em101&Smith&\$65\\\hline Em104&Lee&\$90\\\hline Em105&Carlsson&\$80
\\\bhline
\end{tabular}
$$\normalsize

\end{example}

Thus we see that filtering is simply an application of the same data migration functor story.

\subsection{A normal form for data migration}

We have seen several different forms of data migration functors throughout this paper. One may perform a sequence of data migration functors, e.g. moving data from one schema to another, then changing the data types, and finally filtering the result. In many cases, such combination of data migration functors can be rewritten as a sequence of three: a pullback, a right pushforward, and a left pushforward; see \cite{SW}.
\footnote{An earlier version of this paper included a false claim here. It stemmed from the assumption that the results of \cite{GK} could be translated to this context and that therefore any sequence of data migration functors could be rewritten as a sequence of only three. This does not seem to be the case.}
This means that there is a normal form for a quite general class of queries. Any combination of such queries can be written in the form $\Delta_F\Pi_G\Sigma_H$ for some $F,G,H$. This form may not be optimal in terms of speed, but it can serve as a single input format for query optimizers.

\bibliographystyle{amsalpha}

\end{document}